\documentclass[journal,twoside,web]{ieeecolor}
\usepackage{generic}
\usepackage{cite}
\usepackage{amssymb,amsfonts}
\usepackage{amsmath}
\usepackage{autonum}
\usepackage{url}
\usepackage{graphicx}
\usepackage{algorithm,algorithmic}
\PassOptionsToPackage{hyphens}{url}
\usepackage{textcomp}
\usepackage{booktabs}
\usepackage[switch]{lineno}
\usepackage{acronym}
\usepackage[english]{babel}

\usepackage{changes}
\usepackage{paralist}
\usepackage{mathrsfs}  

\usepackage{mathtools}

\usepackage{tikz}
\usetikzlibrary{arrows, arrows.meta}
\usetikzlibrary{automata,positioning}
\usetikzlibrary{shapes.geometric}

\def\BibTeX{{\rm B\kern-.05em{\sc i\kern-.025em b}\kern-.08em
    T\kern-.1667em\lower.7ex\hbox{E}\kern-.125emX}}
\markboth{\hskip25pc IEEE TRANSACTIONS ON AUTOMATIC CONTROL}
{Shi \MakeLowercase{\textit{et al.}}: Policy Gradient Methods for Information-Theoretic Opacity in Markov Decision Processes}

  \newcommand{\nat}{\mathbf{N}}

  \newcommand{\eg}{\textit{e.g.}}

\newcommand{\dist}{\mathcal{D}}

  \newcommand{\supp}{\mathsf{Supp}}

\newcommand{\truev}{\mathsf{true}}

\newcommand{\Eventually}{\diamondsuit \, }

\newcommand{\reals}{\mathbb{R}}

\newtheorem{theorem}{Theorem} 
\newtheorem{definition}{Definition}
\newtheorem{proposition}{Proposition}

\newtheorem{problem}{Problem}

\newtheorem{example}{Example}
\newtheorem{remark}{Remark}

\newtheorem{lemma}{Lemma}
\newtheorem{assumption}{Assumption}

\acrodef{mdp}[MDP]{Markov decision process} 
\acrodef{lmdp}[LMDP]{labeled Markov decision process} 

\acrodef{asw}[ASW]{Almost-Sure Winning}
\acrodef{ltlf}[LTL$_f$]{Linear Temporal Logic over Finite Traces}
\acrodef{ltl}[LTL]{linear temporal logic}
\acrodef{scltl}[co-safe LTL]{syntactically co-safe Linear Temporal Logic}
\acrodef{dfa}[DFA]{deterministic finite automaton}
\acrodef{des}[DES]{discrete event system}

 \newcommand{\init}{q_0}

 
 \DeclareMathOperator*{\optmax}{\mathrm{maximize}}

 \DeclareMathOperator*{\optst}{\mathrm{subject\ to}}


\newcommand{\expect}{\mathbb{E}}
\newcommand{\obs}{E}

\newcommand{\calS}{\mathcal{S}}
\newcommand{\calA}{\mathcal{A}}
\newcommand{\calAP}{\mathcal{AP}}

\acrodef{hmm}[HMM]{hidden Markov model}
\acrodef{fsc}[FSC]{finite state controller}
\acrodef{pomdp}[POMDP]{partially observable Markov decision process} 

\newcommand{\matO}{{\mathbf{O}}}
\newcommand{\matT}{{\mathbf{T}}}
\newcommand{\matA}{{\mathbf{A}}}

\definecolor{darkgreen}{rgb}{0,0.5,0}

\begin{document}
\title{ Policy Gradient Methods for Information-Theoretic Opacity in Markov Decision Processes }

\author{Chongyang Shi$^{1}$, \IEEEmembership{Graduate Student Member, IEEE}, Sumukha Udupa$^{1}$, \IEEEmembership{Graduate Student Member, IEEE},\\ Michael R. Dorothy$^{2}$, \IEEEmembership{Member, IEEE}, Shuo Han$^{3}$, \IEEEmembership{Member, IEEE} and Jie Fu$^{1}$, \IEEEmembership{Member, IEEE}
\thanks{$^{1}$Chongyang Shi, Sumukha Udupa, and Jie Fu are with the Department of Electrical and Computer Engineering, University of Florida, Gainesville, FL, USA.
        {\tt\small \{c.shi, sudupa, fujie\}@ufl.edu}.}%
\thanks{$^{2}$Michael R. Dorothy is with DEVCOM Army Research Laboratory, Adelphi, MD, USA.
        {\tt\small michael.r.dorothy.civ@army.mil}.}%
\thanks{$^{3}$Shuo Han is with the Department of Electrical and Computer Engineering, University of Illinois Chicago, Chicago, IL, USA.
        {\tt\small hanshuo@uic.edu}.}%
}

\maketitle

\begin{abstract}
Opacity, or non-interference, is a property ensuring that an external observer cannot infer confidential information (the ``secret'') from system observations. We introduce an information-theoretic measure of opacity, which quantifies information leakage using the conditional entropy of the secret given the observer’s partial observations in a system modeled as a  Markov decision process (MDP). Our objective is to find a control policy that maximizes opacity while satisfying task performance constraints, assuming that an informed observer is aware of the control policy and system dynamics. Specifically, we consider a class of opacity called state-based opacity, where the secret is a propositional formula about the past or current state of the system, and a special case of state-based opacity called language-based opacity, where the secret is defined by a temporal logic formula (LTL) or a regular language recognized by a finite-state automaton. First, we prove that finite-memory policies can outperform Markov policies in optimizing information-theoretic opacity. Second, we develop an algorithm to compute a maximally opaque Markov policy using a primal-dual gradient-based algorithm, and prove its convergence. Since opacity cannot be expressed as a cumulative cost, we develop a novel method to compute the gradient of conditional entropy with respect to policy parameters using observable operators in hidden Markov models.  The experimental results validate the effectiveness and optimality of our proposed methods.
\end{abstract}

\begin{IEEEkeywords}
Information theory and control, Markov processes, optimization, stochastic systems.
\end{IEEEkeywords}

\section{INTRODUCTION}
Opacity generalizes and unifies various notions of confidentiality, including secrecy, anonymity, and privacy. 
It focuses on concealing sensitive aspects of system behavior, thereby enhancing security and privacy measures in various applications. For example, opacity has wide applications in the analysis of cyber-physical systems~\cite{lin2011opacity, yin2019approximate, Parv2013Privacy}, and state estimation~\cite{SHU20083054, molloy2023smoother}. Different types of opacity have been studied, including \emph{state-based}, 
which requires the secret behavior of the system (i.e., the membership of its initial/current/past state to the set) to remain opaque~\cite{saboori2007notions,saboori2013current,HanX2023Scai,saboori_verification_2013,Saboori2009Kstep,yinInfinitestepOpacityKstep2019}; and \emph{language-based} opacity, wherein the secret is defined by a language or a temporal logic formula~\cite{dubreil2008opacity,lin2011opacity}. This concept was initially introduced in seminal works such as~\cite{Mazar2004UsingUF, bryansOpacityGeneralisedTransition2006}, in which a language-based opacity is enforced qualitatively if the observer cannot distinguish a trajectory satisfying the secret from one violating the secret, given his partial observation.  


Qualitative opacity, commonly studied in discrete event systems (DESs), characterizes whether a system can entirely conceal its secret behavior from an external observer. Formally, a system is qualitatively opaque if, for every trajectory whose final state lies in the secret set $G$, there exists at least one observation-equivalent trajectory whose final state does not belong to $G$. In other words, based solely on the observations, the observer cannot determine with certainty whether the secret has occurred. 
In stochastic systems, however, qualitative opacity is often inadequate for confidentiality protection, since it only ensures the possibility of concealing secrets without quantifying the likelihood of doing so. Even if a system is qualitatively opaque, probabilistic inference may still lead to significant information leakage. For example, consider a system that can generate two trajectories producing the same observation sequence: one that includes the secret state with probability $0.99$, and another that excludes it with probability $0.01$. Under the qualitative definition, the system remains opaque, yet an observer can infer that the secret state is reached with high probability ($0.99$), revealing substantial information about the secret. 

With this motivation, we introduce the notion of \emph{information-theoretic opacity}. 
Building on Shannon's work on secrecy~\cite{shannonCommunicationTheorySecrecy1949} and the concept of guesswork~\cite{khouzani2017leakage}, we propose to measure opacity using the conditional entropy of the secret given the observations. 
The idea comes from Fano's inequality in information theory~\cite{Wiley2005information}:
\[
P_E \ge \frac{H(X|Y) - 1}{\log(|\mathcal{X}|-1)},
\]
where $P_E$ is the probability of incorrect guess, $H(X|Y)$ is the conditional entropy of a random variable $X$ given $Y$; $|\mathcal{X}|$ is the size of the support of $X$. Maximizing the conditional entropy $H(X|Y)$ effectively increases the lower bound of the probability of guess error. Consequently, this enhances the confidentiality of the secret. 

We study two notions of information-theoretic state-based opacity: last-state opacity and initial-state opacity. The distinction lies in the placement of the secret state: in the former, the secret is defined as the final state of a run, whereas in the latter, it is defined as the initial state. Building on these concepts, we then introduce information-theoretic language-based opacity, which generalizes opacity to entire sequences of system behaviors.
Based on the relation between temporal logic and automaton~\cite{vardi2005automata}, the secret in language-based opacity is defined to be the automaton state that captures the progress in satisfying a linear temporal logic formula. This definition generalizes existing language-based opacity in supervisory control, where the secret is whether the formula is satisfied or not~\cite{wintenberg2022general}.

Given the entropy-based measure of opacity, we investigate a class of constrained opacity-enforcement problems, where the objective is to maximize opacity with respect to a given secret in \iac{mdp}, 
while ensuring that task performance constraints are satisfied. We consider an \textit{informed observer} who has complete knowledge of both the system model and the control policy. Through a simple example, we demonstrate that a finite-memory policy can outperform a Markov policy in optimizing opacity. Then, we develop gradient-based algorithms for computing an opacity-maximizing 
Markov policy. We propose a novel method based on observable operators~\cite{jaeger2000observableoperator} to compute or approximate the gradient of conditional entropy with respect to the control policy parameters. Leveraging this gradient estimation, we employ a primal–dual gradient-based optimization technique to compute a locally optimal policy that maximizes opacity while satisfying a constraint on the total return.  Moreover, we establish the differentiability of conditional entropy with respect to policy parameters, which guarantees the convergence of the gradient algorithm. Finally, we demonstrate the effectiveness and near-optimality of the proposed approach through simulations in both a grid world environment and a graph \ac{mdp} example.

\textbf{Related work:}
In supervisory control, 
different types of qualitative opacity and corresponding methods to enforce opacity have been investigated, including state-based opacity \cite{saboori2007notions,saboori2013current,HanX2023Scai,saboori_verification_2013,Saboori2009Kstep,yinInfinitestepOpacityKstep2019}, language-based opacity \cite{dubreil2008opacity,lin2011opacity,shi2023synthesis}, 
and model-based opacity \cite{keroglouProbabilisticSystemOpacity2016}, the last of which aims to prevent the observer from discerning the true model of the system among multiple candidates. 
While qualitative opacity assesses whether a system is opaque or not, a quantitative analysis of opacity is crucial for quantifying the degree of opacity within a system. In \cite{berardQuantifyingOpacity2015, berardProbabilisticOpacityMarkov2015},  quantitative opacity is defined as the probability of generating an opaque run in a stochastic system.
Recent work \cite{Yin2021approximate} introduces \textit{approximate opacity}, which is achieved by constructing a symbolic abstraction -- a deterministic transition system -- of a continuous-state control system using a simulation relation, and
thereby extending opacity analysis to continuous systems. 

In \cite{Chen2023symbolic}, the authors present a symbolic differential privacy (DP) mechanism that protects a sensitive word by generating a random word that
is likely nearby. Although both opacity and DP are concerned with hiding information,  
they differ significantly in both the objective and the technical approach: DP is to ensure that individual data points cannot be easily inferred from an aggregated output based on data, even against   adversaries with any side information, while opacity is to ensure that the observer is maximally uncertain about some secret with partial observations of the system.
Rather than restricting 
inference about individual data points, opacity quantifies the residual uncertainty of the adversary regarding the secret. Technically, differential privacy relies on randomized perturbations on the aggregated output
, whereas opacity is enforced by designing system behaviors   that strategically obscure the secret from the observer.

Our definition of information-theoretic opacity is quantitative in nature but differs from differential privacy~\cite{mirInformationTheoreticFoundationsDifferential2013}. Differential privacy introduces perturbation-based mechanisms to reduce the mutual information between the output and any individual datum to a certain level, typically by adding controlled noise (e.g., Laplace or Gaussian). In contrast, opacity in control systems does not rely on perturbations; instead, it seeks to design a control policy that limits the information an observer can infer about a system’s secret from its observable behavior. 
While both frameworks share the goal of concealing sensitive information, they differ fundamentally in their mechanisms: differential privacy enforces privacy through randomized perturbations, whereas opacity achieves confidentiality by shaping system dynamics and observation structures through policy design.

The proposed information-theoretic opacity is inspired by leakage-minimal channel design in communication \cite{khouzani2017leakage} and will be elaborated in Section~\ref{subsec:problem_statement}. A closely related work is \cite{Parv2013Privacy}, in which the author formulated privacy-preserving control as an optimization problem to  balance utility and information leakage quantified by conditional entropy (equivocation) in \ac{mdp}. Unlike our work, which develops gradient-based optimization for maximizing information-theoretic opacity in stochastic systems, their formulation focuses on static privacy–utility tradeoffs and convex Bellman-equation structures under specific observability assumptions.

Using entropy measure for trajectory obfuscation has also been proposed and studied in \cite{savas2020temporal, molloy2023smoother}. In \cite{savas2020temporal}, the authors add a linear temporal logic constraint to the entropy maximization problem for MDPs. The work in \cite{molloy2023smoother} demonstrated that the conditional entropy of a state trajectory, given both measurements and control inputs, can be decomposed as a cumulative sum of entropy terms. Consequently, standard \ac{pomdp} solvers can be adapted to design controllers that either obfuscate the trajectory to protect sensitive information or enhance its transparency to improve estimation.
 However, in our setting, the observer has no access to the control input. As a result, the conditional entropy of the secret given observation cannot be formulated as a cumulative sum of entropy terms as in \cite{molloy2023smoother}, which prevents  \ac{pomdp} solvers from being applied directly.

\section{Preliminary and Problem Formulation}
\label{sec: preliminary}

\subsection{Preliminaries}
\noindent \textbf{Notation:} The set of real numbers is denoted by $\reals$. Random variables will be denoted by capital letters, and their realizations by lowercase letters (\eg, $X$ and $x$). The probability mass function (pmf) of a discrete random variable $X$ will be written as $p(x)$, the joint pmf of $X$ and $Y$ as $p(x, y)$, and the conditional pmf of $X$ given $Y = y$ as $p(x|y)$ or $p(x|Y = y)$. The sequence of random variables and their realizations with length $T$ are denoted as $X_{[0:T]}$ and $x_{[0:T]}$, respectively. Given a finite and discrete set $\mathcal{S}$, let $\dist(\mathcal{S})$ be the set of all probability distributions over $\mathcal{S}$. The set $\mathcal{S}^{T}$ denotes the set of sequences with length $T$ composed of elements from $\mathcal{S}$, and $\mathcal{S}^\ast$ denotes the set of all finite sequences generated from $\mathcal{S}$.


\paragraph*{The Planning Problem} Consider a stochastic system  modeled as an \ac{mdp} $M=\langle \mathcal{S}, \mathcal{A}, P,\mu_0, R, \gamma \rangle$ where $\mathcal{S}$ is a finite set of states, $\mathcal{A}$ is a finite set of actions, $P:\mathcal{S} \times \mathcal{A}\rightarrow \dist(\mathcal{S})$ is a probabilistic transition function and  $P(s'|s, a)$ is the probability of reaching state $s'$ given that action $a$ is taken at the state $s$,   $\mu_0$ is the initial state distribution, $R: \mathcal{S}\times \mathcal{A}\rightarrow \reals$ is a reward function, and  $\gamma \in [0,1]$ is the discount factor.

 A  Markov policy for $M$ is a function $\pi: S \rightarrow \dist(A)$. A finite-memory policy for $M$ is a function $\pi: (S\times A)^\ast \times S \rightarrow \dist(A)$. 
For a   (Markov/finite-memory) policy $\pi $,  the value function $V  ^{\pi}: \mathcal{S} \rightarrow \reals$ is defined as
\[
V ^{\pi}(s) = \expect_{\pi}[\sum\limits_{t = 0}^{\infty}\gamma^{t}R(S_t, A_t)|S_0 =s],
\] where $\expect_{\pi}$ is the expectation w.r.t. the probability distribution induced by the policy $\pi$ from $M$. We denote the stochastic process induced by the policy $\pi$ as $M_\pi$, and $S_t, A_t$ the $t$-th state and action in the stochastic process $M_\pi$, respectively.

However, unlike classical MDPs, the planning objective is not only to ensure certain reward return or to maximize the total return, but also to protect the confidential information in the system against an eavesdropping observer.  We refer to the planning agent as player 1, or P1, and the observer as player 2, or P2.  We assume that P1 has access to the states. And we assume that the observer P2 has the complete knowledge of MDP, and thereby knows the initial state distribution $\mu_0$. However, the actions of P1 are  non-observable for P2.
P2's observation function is common knowledge, defined as follows:

\begin{definition}[Observation function of P2]
Let $\mathcal{O}$ be a finite set of observations. The state-observation function $\obs: \mathcal{S}  \rightarrow \dist(\mathcal{O})$ of P2 maps a state $s$ to a distribution $\obs(s)$ in the observation space. The action is non-observable.
\end{definition}

Note that the assumption on non-observable actions can be easily relaxed by augmenting the state space with $\mathcal{S}\cup (\mathcal{S}\times \mathcal{A})$ and defining the state-observation function over the augmented state space. This assumption is made only for clarity.

For an MDP $M$ and P2's observation function $\obs$, a policy $\pi$ induces a discrete stochastic process $\{S_t, A_t, O_t, t\in \nat\}$ where each $S_t$ is the random variable of state at time $t$ and $O_t$ is the random variable of observation at time $t$,
\[
P(S_{t+1}=s' | S_{[0:t]} = s,A_{[0:t]} = a_{[0:t]}) = P(s'|s,a_t),
\]
For a Markov policy, 
\[
P(A_t= a | S_{[0:t]} = s_{[0:t]}) = P(A_t= a | S_{ t} = s_{t})=   \pi(a|s_t)\]
For a finite memory policy, 
\begin{equation}
\begin{aligned}
& P(A_t= a | S_{[0:t]} = s_{[0:t]},A_{[0:t-1]}= a_{[0:t-1]}) \\
&= \pi(a|s_{[0:t]},a_{[0:t-1]})
\end{aligned}
\end{equation}
and $P(O_t = o | S_t=s) =  \obs(o|s). $

Next, we will introduce some basic definitions in information theory that are used to quantify the opacity. 

The \emph{entropy of a random variable} $X$ with a countable support $\cal X$ and a probability mass function $p$ is 
\[
H(X) = -\sum_{x\in \mathcal{X}}p(x)\log p(x).
\]
The joint entropy of two random variables $X_1,X_2$ with the same support $\mathcal{X}$ is 
\[
H(X_1,X_2) =  -\sum_{x_1\in \mathcal{X}}\sum_{x_2\in \mathcal{X}} p(x_1,x_2)\log p(x_1,x_2).
\]
The conditional entropy measures the uncertainty about $X_2$ given knowledge of $X_1$.  It is defined as
\[
H(X_2|X_1) =   -\sum_{x_1\in \mathcal{X}}\sum_{x_2\in \mathcal{X}} p(x_1,x_2) \log p(x_2|x_1).
\]
The conditional entropy is also related to the entropy of $X_1$ and the joint entropy.
$$
H(X_2|X_1) = H(X_1,X_2) - H(X_1).
$$
A higher conditional entropy makes it more challenging to infer $X_2$ from $X_1$.


\subsection{Problem Statement}

\label{subsec:problem_statement}

In this section, we develop a mathematical formulation of the opacity-enforcing planning problem, based on the principle of minimal information leakage channel design \cite{khouzani2017leakage} in the context of secret communications. 

In leakage-minimal channel design, the channel is represented by a conditional probability distribution $P(Y|X)$, where $X$ is the input, and $Y$ is the output of the channel. The leakage-minimal design aims to maximize the conditional entropy $H(X|Y)$ so that the information leaked through $Y$ about $X$ is minimized. 

The formulation used in leakage-minimal channel design inspires us to employ the same measure to enforce opacity in \ac{mdp}, where the ``channel'' represents the closed-loop system, i.e., the stochastic process induced by the policy $\pi$. Opacity is measured by the conditional entropy of the automaton or the task state given P2's observations.

\begin{definition}
\label{def:secrets}
Given an \ac{mdp} $M$ and a policy $\pi$, let $M_\pi = \{S_t, A_t, O_t, t\in \nat\}$ be the discrete-time stochastic process induced by $\pi$. Let $T>0$ be finite and $W$ be a set of \emph{secret} states. For any $0\le t\le T$, define
\[ Z_t = \mathbf{1}_W(S_t).\] 
That is, $Z_t$ is the random variable representing if the $t$-th state is in the set of secret states. The random variable $Z_t$ is binary (i.e., $Z_t \in \mathcal{Z} = \{0,1\}$) for each $0 \le t \le T$.
\end{definition}

The conditional entropy of  $Z_t$ given observations $Y= O_{[0:T]}$ can be expressed as 
\begin{multline}
H(Z_t|Y; {M_\pi}) =
-\sum_{z \in \mathcal{Z}}\sum_{y \in \mathcal{O}^T} P^{M_\pi}(Z_t= z, Y= y) \\ \cdot \log P^{M_\pi}(z|Y=y),
\end{multline}
where $P^{M_\pi}(Z_t=z,y)$ is the joint probability of $Z_t=z$ (for $z=0,1$) and observation $y = o_{[0:T]}$ given the stochastic process $M_\pi$,  $P^{M_\pi}(z|Y=y)$ is the conditional probability of $z$ given the observation $y$, and $\mathcal{O}^T$ is the sample space for the observation sequence of length $T$. This conditional entropy \cite{shannonCommunicationTheorySecrecy1949} can be interpreted as the fewest number of subset-membership queries an adversary must make before discovering the secret. We also consider the problem of hiding the initial state. 
Then, nature selects one of the initial states probabilistically according to $\mu_0$, says $s_0$. The agent aims to hide the actual realization of the initial state $s_0$ from the observer.

Based on the notion of information leakage  \cite{yasuokaQuantitativeInformationFlow2010},
the quantitative analysis of confidential information of a dynamical system is defined as the difference between an attacker's \emph{capability in guessing the secret} before and after available observations about the system. Thus, the maximal opacity-enforcement planning with task constraints can be formulated as the following problem:
\begin{problem}[Maximal last-state opacity]
    Given the \ac{mdp} $M$, a set $W$ of secret states, a finite horizon $T$,  compute a policy that maximizes the conditional entropy $H(Z_T|Y; M_\pi)$ between the random variable $Z_T$ and the observation sequence $Y = O_{[0:T]}$ while ensuring the total discounted reward exceeds a given threshold $\zeta$.
\end{problem}
In other words, the objective is to ensure the adversary, with the knowledge of the agent's policy and the observations, is maximally uncertain regarding whether the last state of a finite path is in the secret set. This symmetric notion of opacity implies that opacity will be minimal when the observer is consistently confident that the agent either visited or avoided the secret states, given an observation. This stands in contrast to asymmetric opacity \cite{berardQuantifyingOpacity2015}, which only assesses the uncertainty regarding the visits of secrets. Meanwhile, the agent is required to get an adequate total reward (no less than $\zeta$)  to achieve a satisfactory task performance.

\begin{problem}[Maximal initial-state opacity]
\label{problem:initial-state}
    Given the \ac{mdp} $M$, a finite horizon $T$, and the initial state distribution $\mu_0$.
    Compute a policy that maximizes the conditional entropy $H(S_0  |Y; {M_\pi})$ of the initial state $S_0$ given the observation sequence $Y = O_{[0:T]}$ while ensuring the total discounted reward exceeds a given threshold $\zeta$.
\end{problem}

\section{Synthesizing Constrained Opacity-Enforcement Controllers  }

\subsection{Finite-memory policy}
Beyond Markov policies, we can also employ a \emph{finite-memory} control policy $\pi$. A finite-memory policy is defined as a tuple $\pi \coloneqq \langle \mathfrak{M}, \calS, \calA, \delta_f, \psi, m_0 \rangle$, where:
\begin{inparaenum}
\item $\mathfrak{M}$ is a set of memory states.
\item $\calS$ and $\calA$ denote the sets of inputs and outputs, respectively. Note the state set of the original MDP $M$ is now the input set of the finite-memory policy.
\item $\delta_f: \mathfrak{M}\times \calS \rightarrow \mathfrak{M}$ is a deterministic transition function that updates the memory state based on the current memory state $m$ and input $s\in \calS$, producing the next state $\delta_f(m, s)$.
\item $\psi: \mathfrak{M} \rightarrow \dist(\calA)$ is a probabilistic output function that determines action distributions based on memory states.
\item $m_0$ is the initial memory state.
\end{inparaenum}

Since the Markov policy is a special case of the finite-memory policy, the latter must perform at least as well as the former. To illustrate the advantage of finite-memory policies, we present a simple example demonstrating that a finite-memory policy can achieve superior performance compared to a Markov policy.

\begin{example}
Consider the \ac{mdp} 
$M$ represented as a graph in Fig.~\ref{fig:graph}. The environment includes two sensors: a red sensor at state $q_1$ and a blue sensor at state $q_2$. When the agent is at the state of a sensor
, the observer receives a corresponding observation—``R" for the red sensor and ``B" for the blue sensor—with a probability of $0.5$, while with a probability $0.5$, the observer receives a null observation (``N"). At the initial state, the observer always receives a null observation.
\begin{figure}
\label{fig:graph}
    \centering
    \begin{tikzpicture}[shorten >=1pt,node distance=2.5cm,on grid,auto] 
    \node[state, initial] (0)   {$q_0$}; 
    \node[state,accepting] (1) [above right=of 0, fill=red!60] {$q_1$};   
    \node[state,accepting] (2) [right=of 0, fill=blue!15] {$q_2$}; 
\path[->] 
(0) edge  node[right]{$b:0.5$} (1)
(0) edge  node{$b:0.5$} (2)          

(0) edge [loop above] node{$a:1$} (0) 
(1) edge [loop right]  node{$\truev$} (1)
(2) edge [loop above] node{$\truev$} (2) 
;
\end{tikzpicture}
    \caption{We consider an MDP defined on a graph where the agent has two actions, $a$ and $b$. At the initial state $q_0$, if the agent takes action $a$, it remains in $q_0$. If the agent takes action $b$, it moves with equal probability $0.5$ to either state $q_1$ or state $q_2$. Both $q_1$ and $q_2$ are sink states, meaning that once the agent reaches one of these states, it remains there indefinitely.}
    \label{fig:finite_memory}
\end{figure}
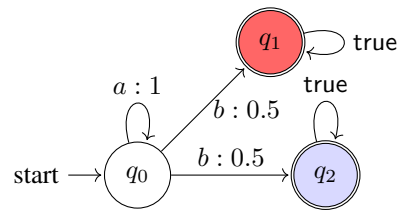
\end{example}

The agent receives a reward of $1$ upon reaching either state $q_1$ or $q_2$. For this example, we ignore the value constraint (i.e., $\zeta = 0$) and set the discount factor to $\gamma = 1$. The trajectory length is restricted to $T = 2$, meaning the agent can take at most two steps. Additionally, the secret state is set to be $q_1$, implying that the agent aims to prevent the observer from determining whether it has visited $q_1$.

\paragraph{Markov Policy}
Under this setting, we only need to consider the policy at the initial state $q_0$ since $q_1$ and $q_2$ are sink states. Let $\pi$ be a Markov policy such that $\pi(a | q_0) = \alpha$ and $\pi(b | q_0) = \beta$. To compute the conditional entropy $H(Z_T | Y; \pi)$, we enumerate all possible state trajectories along with their corresponding probabilities, as shown in Table~\ref{table:trajectory_prob}.

Then, considering the sensor noise, we compute the probability of all possible observations $P(y)$. The probabilities are given by $P(NNN) = \alpha^2 + \frac{1}{2} \alpha \beta + \frac{1}{8} \beta + \frac{1}{8} \beta$, $P(NRN) = \frac{1}{8} - \frac{1}{8} \alpha$, $P(NNR) = \frac{1}{4} \alpha \beta + \frac{1}{8} \beta$, $P(NRR) = \frac{1}{8} \beta$, $P(NBN) = \frac{1}{8} \beta$, $P(NBN) =  \frac{1}{4} \alpha \beta + \frac{1}{8} \beta$, $P(NBB) = \frac{1}{8} \beta$. After obtaining $P(y)$ for all observations $y$, we compute the posterior probability as $P(z_T|y) = P(z_T, y) / P(y)$. The probability $P(1, y)$ is obtained similarly by summing the probabilities of trajectories that reach the secret state $q_1$ and generate the observation $y$. Since $P(0, y) = P(y) - P(1, y)$, we can also obtain the probabilities $P(0, y)$ for all $y$. We do not explicitly list all values of $P(z_T, y)$ here. Note that $\beta = 1 - \alpha$, then the posterior probabilities are
\begin{equation}
P(Z_T = 1|NNN) = \frac{-\frac{1}{4} \alpha^2 + \frac{1}{8} \alpha + \frac{1}{8}}{\frac{1}{2} \alpha^2 + \frac{1}{4} \alpha + \frac{1}{4}},
\end{equation}
$P(Z_T = 1|NRN) = P(Z_T = 1|NRR) =  P(Z_T = 1|NNR) = 1$, $P(Z_T = 1|NNB) =  P(Z_T = 1|NBB) =   P(Z_T = 1|NBN) = 0$. For any $y$, we have $P(Z_T = 1 | y) = 1 - P(Z_T = 0 | y)$. To compute the conditional entropy $H(Z_T | Y; \pi)$, we use the fact that terms where $p \log p = 0$ (when $p = 0$ or $p = 1$) can be omitted. The only contributing terms are for $P(Z_T = 1 | NNN)$ and $P(Z_T = 0 | NNN)$. Thus, the conditional entropy under the Markov policy is given by
\begin{equation}
\label{eq:Markov_entropy}
\begin{aligned}
H(Z_T|Y;\pi) & = - \Big[ (-\frac{1}{4} \alpha^2 + \frac{1}{8} \alpha + \frac{1}{8}) \log(-\frac{1}{4} \alpha^2 + \frac{1}{8} \alpha + \frac{1}{8})\\
&+ (\frac{3}{4} \alpha^2 + \frac{1}{8} \alpha + \frac{1}{8}) \log(\frac{3}{4} \alpha^2 + \frac{1}{8} \alpha + \frac{1}{8}) \\
&- (\frac{1}{2} \alpha^2 + \frac{1}{4} \alpha + \frac{1}{4}) \log(\frac{1}{2} \alpha^2 + \frac{1}{4} \alpha + \frac{1}{4}) \Big].
\end{aligned}
\end{equation}

\begin{figure}[tp!]
\includegraphics[width=\linewidth]{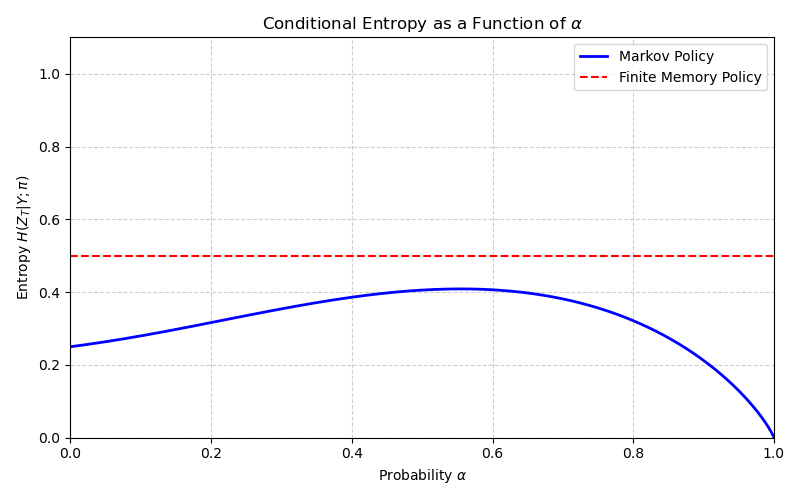}
\centering
\caption{Compare the Markov policy and finite-memory policy.}
\label{fig:comparsion_Markov_finite}
\end{figure}

\begin{table}[t]
\centering
\begin{tabular}{|c|c|c|c|c|c|}
\hline
    &$q_0q_0q_0$ & $q_0q_0q_1$ & $q_0q_0q_2$ & $q_0q_1q_1$ & $q_0q_2q_2$\\ \hline
probability   & $\alpha^2$  & $\frac{1}{2} \alpha \beta$ & $\frac{1}{2} \alpha \beta$ & $\frac{1}{2} \beta$ & $\frac{1}{2} \beta$         \\ \hline
\end{tabular}
\caption{All possible state trajectories and their corresponding probabilities.}
\label{table:trajectory_prob}
\end{table}

\paragraph{Finite-Memory Policy}  We define the finite-memory policy $\pi_f$ that $\pi_f(a|q_0) = 1$, $\pi_f(b|q_0) = 0$, $\pi_f(a|q_0 q_0) = 0$, $\pi_f(b|q_0 q_0) = 1$ and the agent randomly takes actions for other memory states, which is a finite subset of  $Q^\ast$.   Under this policy, the probabilities of all possible state trajectories are given by $P(q_0q_0q_1) = 0.5$ and $P(q_0q_0q_2) = 0.5$. The corresponding observation probabilities, accounting for sensor noise are $P(NNN) = 0.5$, $P(NNB) = 0.25$ , $P(NNR) = 0.25$. The posterior probabilities of the secret state $q_1$ given each observation are $P(Z_T = 1|NNN) = 0.5$, $P(Z_T = 1|NNR) = 1$, $P(Z_T = 1|NNB) = 0$. Thus, the conditional entropy under the finite-memory policy is $H(Z_T|Y;\pi_f) = 0.5$. 

We plot the function from equation~\eqref{eq:Markov_entropy} alongside the value of $H(Z_T | Y; \pi_f)$ in Fig.~\ref{fig:comparsion_Markov_finite}. The figure demonstrates that the finite-memory policy $\pi_f$ outperforms the Markov policy $\pi$, as the conditional entropy under $\pi_f$ is higher, indicating more uncertainty about the secret state after observing the trajectory. 

\subsection{Primal-Dual Policy Gradient for Constrained Minimal Information Leakage}
\label{subsec:def_last_state_opacity}
In this section, we show how to compute a Markov policy for solving the constrained opacity-enforcement problems. The same method can be extended to finite-memory policies by constructing an augmented-state Markov decision process as shown in Appendix~\ref{app:finite_memory_reduce}.

For clarity in notation, we consider a finite state set $\mathcal{S}=\{1,\ldots, N\}$. 
We introduce a set $\{\pi_\theta\mid \theta\in \Theta\}$ of parameterized policies, where $\Theta$ is a finite-dimensional parameter space. 
For any policy $\pi_\theta$ parameterized by $\theta$, the Markov chain induced by $\pi_\theta $ from $M$ is denoted by $M_\theta = \{S_t, O_t, A_t, t\ge 0\}$, where $S_t$ is the random variable for the $t$-th state, and $A_t$ is the random variable for the $t$-th action.

The constrained last-state opacity-enforcement planning problem can be formulated as a constrained optimization problem as follows: 
\begin{equation}
\label{eq:opt_problem_1}
\begin{aligned}
& \optmax_\theta && H (Z_T|Y;\theta) \\
&\optst &&  V(\mu_0,\theta) \ge \zeta,
\end{aligned}
\end{equation}
where $\zeta$ is a lower bound on the value function. The value $V(\mu_0,\theta)$ is obtained by evaluating the policy $\pi_\theta$ given the initial state distribution $\mu_0$, i.e., $V(\mu_0, \theta) \coloneqq V^{\pi_\theta}(\mu_0)$, and $H(Z_T|Y;\theta)  \coloneqq H(Z_T|Y;  M_\theta)$.

\begin{assumption}
\label{assume:local_maxima}
Every local maximum of $V(\mu_0,\theta)$ is a feasible solution. 
\end{assumption}


Assumption~\ref{assume:local_maxima} is a common assumption to ensure the convergence of a gradient algorithm to a feasible solution  \cite{Shie2018rewardconstraint}. And this assumption is guaranteed to hold in some special cases, e.g., the case of softmax policy parameterization as considered later in our paper.


For this constrained optimization problem with an inequality constraint,   we formulate the problem \eqref{eq:opt_problem_1} as the following max-min problem for the associated Lagrangian $L (\theta, \lambda)$:
\begin{equation}
\label{eq:opt_problem_2}
\max_\theta \min_{\lambda \ge 0} L(\theta, \lambda) = H(Z_T|Y;\theta) + \lambda (V(\mu_0, \theta) - \zeta),
\end{equation}
where $\lambda$ is the multiplier. 

In each iteration $k$, the primal-dual gradient descent-ascent algorithm is given as
\begin{equation}
\begin{aligned}
& \theta_{k + 1} = \Gamma_\theta[\theta_k + \eta_k \nabla_\theta L(\theta_k, \lambda_k)], \\
& \lambda_{k + 1} = \Gamma_\lambda[\lambda_k - \kappa_k (V(\mu_0, \theta_k) - \zeta)],
\end{aligned}
\label{eq:primal_dual_algorithm}
\end{equation}
where $\eta_k > 0, \kappa_k > 0$ are step sizes. $\Gamma_\theta$ is a projection operator, which keeps the iterate $\theta_k$ stable by projecting onto a
compact and convex set $[\theta_{\min}, \theta_{\max}]^d$ where $d$ is the dimension. $\Gamma_\lambda$ projects $\lambda_k$ into the range $[0, \lambda_{\max}]$ \footnote{When assumption~\ref{assume:local_maxima} holds, $\lambda_{\max}$ can be set to $\infty$.}. And the gradient of Lagrangian function w.r.t. $\theta$ can be computed as
\begin{equation}
\label{eq:lagrangian}
\nabla_\theta L(\theta, \lambda) = \nabla_\theta H(Z_T|Y;\theta) + \lambda \nabla_\theta V(\mu_0, \theta).
\end{equation}
The term $\nabla_\theta V(\mu_0, \theta)$ can be computed following the classical policy gradient theorem \cite{Sutton1999policy}. In comparison, computing the gradient $\nabla_\theta H(Z_T|Y;\theta)$ of the conditional entropy w.r.t. the policy parameter $\theta$ is nontrivial because the conditional entropy is non-cumulative. The following assumption guarantees the convergence of the algorithm, which will be shown in the section~\ref{sec:convergence_proof}.
\begin{assumption}
\label{assume:step_size}
The step sizes $\eta_k, \kappa_k$ in algorithm~\ref{eq:primal_dual_algorithm} satisfies
\begin{equation}
\begin{aligned}
\sum_{k=0}^\infty \eta_k = \sum_{k = 0}^\infty \kappa_k = \infty, \sum_{k=0}^\infty (\eta_k^2 + \kappa_k^2) < \infty, \frac{\kappa_k}{\eta_k} \to 0.
\end{aligned}
\end{equation}
\end{assumption}
The above assumption is widely used in the stochastic approximation domain~\cite{Borkar2008}. For instance, one can set the $\eta_k = \frac{1}{k^{\xi}}$ where $0.5 < \xi < 1$ and $\kappa_k = \frac{1}{k}$ to satisfy the assumption.

Next, we show how to employ the observable operator methods~\cite{jaeger2000observableoperator} to compute the gradient.

\subsection{Computing the Gradient and Hessian of Conditional Entropy}
\label{subsec:compute_gradient_last}

From the observer's perspective, the stochastic process induced by a Markov policy $\pi_\theta$ is a \ac{hmm}
$HM_\theta = (\mathcal{S}, \mathcal{O}, P_\theta, E)$, where $\mathcal{S}$ is the state space, $\mathcal{O}$ is the observation space, $P_\theta$ is the transition kernel, and $E$ is the emission function, defined by P2's observation function. We first discuss how to compute the gradient and the Hessian of $H(Z_T|Y;\theta)$ with respect to the policy parameter $\theta.$

\subsubsection{The case of last-state opacity}

The conditional entropy of $Z_T$ given an observation sequence $Y$ can be written as
\begin{equation}
\label{eq:HMM_entropy}
H(Z_T|Y;\theta) = - \sum_{y \in \mathcal{O}^T} \sum_{z_T \in \{0,1\}} P_\theta(z_T, y) \log P_\theta(z_T | y), 
\end{equation}
where the probability measure $P_\theta \coloneqq P^{HM_{\theta}}$. The conditional entropy of a binary random variable has a property that $0 \le H(Z_T|Y;\theta) \le 1$. 

\paragraph{Observable Operators}
We propose a novel approach to compute the probability $P_\theta(z_T| y)$ and $P_\theta(y)$ for $z_t\in \{0,1\}$ based on observable operators \cite{jaeger2000observableoperator}. Let the random variable of state, observation, and control action, at time point $t$ be denoted as $S_t, O_t, A_t$, respectively. If we parameterize the policy $\pi$ as $\pi_\theta$, the flipped state transition matrix $\matT^\theta \in \reals^{N \times N}$ would be 
\[
\matT_{i,j}^\theta = P_\theta(S_{t+1} = i|S_t = j) = \sum_{a\in \mathcal{A}} P(i|j,a)\pi_\theta(a|j).
\] 
Let $\matO \in \reals^{M \times N}$ be the observation probability matrix with $\matO_{o,j} =E( o|j)$. 
\begin{definition} 
Given the \ac{hmm} $HM_\theta$, for any observation $o$,
the observable operator $\matA_{o}$ is a matrix of size $N \times N$ with its $ij$-th entry defined as 
$$
\matA_{o}^\theta[i,j] =  \matT_{i, j}^\theta \matO_{o,j} \ ,
$$
which is the probability of transitioning from state $j$ to state $i$ and at the state $j$, an observation $o$ is emitted.
In matrix form, 
\[
\matA_{o}^\theta = \matT^\theta \text{diag}(\matO_{o, 1}, \dots, \matO_{o, N}).
\]
\end{definition}

\begin{proposition}[\cite{jaeger2000observableoperator}, \cite{udupa2025}]
\label{prop:last_opacity_probability} 
The probability of an observation sequence $o_{[0:t]}$ can be written as~
\begin{equation}
\label{eq:matrix_operation}
P_\theta(y) = \mathbf{1}_N^\top \matA_{o_t}^\theta \dots \matA_{o_0}^\theta \mu_0.
\end{equation}
In addition, 
for a fixed state $s_{T + 1} \in \calS$ at time point $T + 1$, we have
\begin{equation}
\label{eq:matrix_operation_sT}
P_\theta(s_{T+1}, y) = \mathbf{1}_{s_{T+1}}^\top \matA_{o_t}^\theta \dots \matA_{o_0}^\theta \mu_0.
\end{equation}
where $\mathbf{1}_{s_{T+1}}$ is a one-hot vector which assigns 1 to the $s_{T+1}$-th entry.
\end{proposition}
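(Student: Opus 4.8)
The plan is to establish \eqref{eq:matrix_operation} first and then derive \eqref{eq:matrix_operation_sT} as an immediate specialization. The key observation is that the observable operator $\matA_o^\theta$ acts on a ``belief-like'' unnormalized vector: if $v \in \reals^N$ has entries $v_j = P_\theta(S_t = j, O_{[0:t-1]} = o_{[0:t-1]})$, then $(\matA_{o_t}^\theta v)_i = \sum_j \matT_{i,j}^\theta \matO_{o_t, j} v_j = \sum_j P_\theta(S_{t+1}=i \mid S_t = j)\, E(o_t \mid j)\, v_j$, which by the Markov/HMM conditional independence structure equals $P_\theta(S_{t+1} = i, O_{[0:t]} = o_{[0:t]})$. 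So applying one observable operator advances the joint "state-and-observation-history" distribution by one time step, consuming one new observation symbol.

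First I would set up the induction on $t$. The base case is $t=0$: the vector $\matA_{o_0}^\theta \mu_0$ has $i$-th entry $\sum_j \matT_{i,j}^\theta \matO_{o_0,j}\, \mu_0(j) = \sum_j P_\theta(S_1 = i \mid S_0 = j) E(o_0 \mid j)\mu_0(j) = P_\theta(S_1 = i, O_0 = o_0)$, using that $P_\theta(S_0 = j) = \mu_0(j)$ and that $O_0$ depends only on $S_0$. For the inductive step I would assume $\bigl(\matA_{o_{t-1}}^\theta \cdots \matA_{o_0}^\theta \mu_0\bigr)_j = P_\theta(S_t = j, O_{[0:t-1]} = o_{[0:t-1]})$ and apply the one-step identity above, invoking the HMM conditional independence ($S_{t+1} \perp O_{[0:t-1]} \mid S_t$ and $O_t \perp$ everything else $\mid S_t$) to conclude the claim for $t$. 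Then \eqref{eq:matrix_operation} follows by left-multiplying by $\mathbf{1}_N^\top$, which marginalizes out $S_{t+1}$: $\mathbf{1}_N^\top \matA_{o_t}^\theta \cdots \matA_{o_0}^\theta \mu_0 = \sum_i P_\theta(S_{t+1} = i, O_{[0:t]} = o_{[0:t]}) = P_\theta(O_{[0:t]} = o_{[0:t]}) = P_\theta(y)$. For \eqref{eq:matrix_operation_sT}, instead of summing over all $i$ I replace $\mathbf{1}_N^\top$ by $\mathbf{1}_{s_{T+1}}^\top$, which simply reads off the $s_{T+1}$-th coordinate, giving $P_\theta(S_{T+1} = s_{T+1}, O_{[0:T]} = o_{[0:T]})$; here one should note the indexing convention that the product of $T+1$ operators $\matA_{o_T}^\theta \cdots \matA_{o_0}^\theta$ applied to $\mu_0$ lands on the time-$(T+1)$ state while consuming observations $o_0, \dots, o_T$.

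The main obstacle is bookkeeping rather than conceptual depth: I need to be careful about the off-by-one indexing (the operator $\matA_{o_t}^\theta$ emits observation $o_t$ at state $S_t$ but transitions to $S_{t+1}$, so a length-$T$ observation sequence $o_{[0:T]}$ consumes $T+1$ operators and produces a time-$(T+1)$ state vector), and about stating the exact conditional-independence properties of the HMM $HM_\theta$ that justify factoring $P_\theta(S_{t+1}=i, O_t = o_t, S_t = j, \dots)$ into $P_\theta(S_{t+1}=i\mid S_t=j)\,E(o_t\mid j)\,P_\theta(S_t=j,\dots)$. These properties follow directly from the definition of the induced stochastic process $M_\theta$ given earlier in the excerpt (the Markov policy makes $A_t$ depend only on $S_t$, hence $S_{t+1}$ depends only on $S_t$, and $O_t$ depends only on $S_t$), so once I state them cleanly the induction goes through mechanically. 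Since this is essentially a restatement of the standard observable-operator-model result of Jaeger cited in the proposition, I would keep the proof short, emphasizing the belief-propagation interpretation and the indexing convention, and citing \cite{jaeger2000observableoperator} for the underlying identity.
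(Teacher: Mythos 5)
The paper does not actually prove this proposition --- it is imported verbatim from the cited references (Jaeger's observable-operator work and \cite{udupa2025}) --- so there is no in-paper proof to compare against. Your induction via the one-step identity $(\matA_{o_t}^\theta v)_i = P_\theta(S_{t+1}=i,\, O_{[0:t]}=o_{[0:t]})$, followed by marginalization with $\mathbf{1}_N^\top$ or coordinate extraction with $\mathbf{1}_{s_{T+1}}^\top$, is exactly the standard forward-algorithm argument underlying the cited result, and it is correct, including your handling of the off-by-one convention (a product of $T+1$ operators consumes $o_0,\dots,o_T$ and lands on the time-$(T+1)$ state), which is consistent with how the paper later uses $P_\theta(s_T, y_{-1})$ with $y_{-1}=o_{[0:T-1]}$.
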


Before we introduce the gradient algorithm to enforce the last-state opacity, we first prove that the conditional entropy $H(Z_T|Y;\theta)$ is differentiable under the following assumption. 
\begin{assumption}
\label{assume: differentiable}
The parametrized policy $\pi_\theta(a|s)$ is twice differentiable, i.e., the gradient $\nabla_\theta \pi_\theta(a|s)$ and Hessian $\nabla_\theta^2 \pi_\theta(a|s)$ exists. 
\end{assumption}
The conditional entropy can be written as the expectation of function $-\log P_\theta (z_T|y)$, i.e.,
\begin{equation}
\label{eq:expectation_entropy}
H(Z_T|Y;\theta) = \expect_{(z_T, y)\sim P_\theta(Z_T,Y)} \left[ -\log P_\theta (z_T|y) \right].
\end{equation}
The function $\log P_\theta (z_T|y)$ is continuous and differentiable with respect to $\theta$. 

\begin{proposition}
Under Assumption~\ref{assume: differentiable}, the conditional entropy $H(Z_T|Y;\theta)$ is differentiable with respect to policy parameter $\theta$.
\end{proposition}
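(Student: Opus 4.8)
The plan is to show that $H(Z_T \mid Y;\theta)$ is a finite composition of differentiable functions of $\theta$ by tracing the dependence through the observable-operator representation from Proposition~\ref{prop:last_opacity_probability}. First I would note that, by Assumption~\ref{assume: differentiable}, each entry of the flipped transition matrix $\matT^\theta_{i,j} = \sum_{a} P(i\mid j,a)\,\pi_\theta(a\mid j)$ is twice differentiable in $\theta$, since it is a finite linear combination of the $\pi_\theta(a\mid j)$; the observation matrix $\matO$ does not depend on $\theta$, so each observable operator $\matA^\theta_o = \matT^\theta \operatorname{diag}(\matO_{o,1},\dots,\matO_{o,N})$ has entries that are twice differentiable in $\theta$. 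Consequently, for every fixed observation sequence $y = o_{[0:T]}$ (of which there are finitely many, as $\mathcal{O}$ is finite and $T$ is finite), the product $\matA^\theta_{o_T}\cdots\matA^\theta_{o_0}\mu_0$ is a finite product of matrices with differentiable entries, hence each coordinate is differentiable; thus $P_\theta(y) = \mathbf{1}_N^\top \matA^\theta_{o_T}\cdots\matA^\theta_{o_0}\mu_0$ and $P_\theta(s_{T+1},y) = \mathbf{1}_{s_{T+1}}^\top \matA^\theta_{o_T}\cdots\matA^\theta_{o_0}\mu_0$ are differentiable in $\theta$. Summing the latter over the finitely many $s_{T+1} \in W$ (respectively $s_{T+1} \notin W$) and relabeling the time index appropriately gives that $P_\theta(Z_T = z, Y = y)$ is differentiable for each $z \in \{0,1\}$ and each $y$.

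Next I would address the conditional probability and the logarithm. We have $P_\theta(z \mid y) = P_\theta(Z_T = z, Y=y)/P_\theta(y)$, and $P_\theta(y) = \sum_{z} P_\theta(Z_T = z, Y = y)$. The quotient of differentiable functions is differentiable wherever the denominator is nonzero, so I would restrict the sum in \eqref{eq:HMM_entropy} to the set of $(z,y)$ with $P_\theta(Z_T = z, Y = y) > 0$: on that set $P_\theta(y) \ge P_\theta(Z_T = z, Y = y) > 0$, so $P_\theta(z\mid y) \in (0,1]$ is well defined and differentiable, and $\log P_\theta(z\mid y)$ is differentiable there since $\log$ is smooth on $(0,\infty)$. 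The remaining terms of the sum, where $P_\theta(Z_T = z, Y=y) = 0$, contribute $0$ to the entropy by the standard convention $0\log 0 = 0$. Finally, $H(Z_T\mid Y;\theta)$ is a finite sum of products of the form $-P_\theta(Z_T=z,Y=y)\log P_\theta(z\mid y)$ over the finite index set, and a finite sum of finite products of differentiable functions is differentiable; hence $H(Z_T\mid Y;\theta)$ is differentiable in $\theta$.

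The main obstacle is the handling of the boundary behavior: the support of $(Z_T, Y)$ --- i.e., which pairs $(z,y)$ have strictly positive probability --- can in principle change as $\theta$ varies, so the index set of ``contributing'' terms is not locally constant everywhere, and one must worry about differentiability at parameter values where a term $P_\theta(Z_T=z,Y=y)$ transitions between zero and positive. The cleanest way to deal with this, which I would adopt, is to invoke the continuity of the map $p \mapsto -p\log p$ on $[0,\infty)$ (extended by $0$ at $0$) together with the fact that $p\log p$ is actually continuously differentiable on $[0,1]$ from the right only in a one-sided sense --- so instead I would argue that each summand $-P_\theta(Z_T = z, Y = y)\log(P_\theta(Z_T=z,Y=y)/P_\theta(y))$ extends to a differentiable function of $\theta$ on a neighborhood of any point where it is positive, and is identically zero (hence trivially differentiable) on the interior of the region where $P_\theta(Z_T = z, Y=y) = 0$; the only subtlety, differentiability at the boundary between these regions, is ruled out because if the parametrization is, e.g., softmax (as assumed later in the paper for the convergence analysis) then every reachable $(z,y)$ has strictly positive probability for all $\theta$, so the support is in fact constant and the boundary case does not arise. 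I would state this explicitly, either by adding the positivity hypothesis or by citing the softmax case, to make the argument rigorous.
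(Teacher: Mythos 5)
Your proof takes essentially the same route as the paper's: both establish differentiability of the observable-operator entries from Assumption~\ref{assume: differentiable} and propagate it through the finite matrix products to $P_\theta(y)$, $P_\theta(z_T,y)$, $P_\theta(z_T\mid y)$, and finally the finite sum defining $H(Z_T\mid Y;\theta)$. You are in fact more careful than the paper on one point it silently skips --- the division by $P_\theta(y)$ and the logarithm at parameter values where a joint probability vanishes --- and your resolution (the support is constant in $\theta$ under a softmax-type parametrization, so no boundary crossing occurs; otherwise add an explicit positivity hypothesis) is the right way to close that gap.
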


\begin{proof}
We first investigate the differentiability of an observable operator $A_{o}^\theta$. The $ij$-th entry of an observable operator $A_{o}^\theta$ is
\begin{equation}
\matA_{o}^\theta[i,j] = \obs(o|s) \sum_{a\in \mathcal{A}} P(s'|s,a) \pi_\theta(a|s).
\end{equation}
The policy $\pi_\theta(a|s)$ is the only term that depends on $\theta$, and it is assumed to be differentiable. The sum and product of differential functions are differentiable. Thus, the observable operator $\matA_{o}^\theta[i,j] $ is differentiable w.r.t $\theta$. 

According to equation~\eqref{eq:matrix_operation_sT}, $P_\theta(s_{T}, y_{-1})$ is differentiable since it is the product of differentiable functions. Then from equation~\eqref{eq:P_zT_yn1}, the probability $P_\theta(Z_{T} = 1| y)$ is differentiable. $P_\theta(Z_{T} = 0| y) = 1 - P_\theta(Z_{T} = 1| y)$ is also differentiable. Finally, by equation~\eqref{eq:HMM_entropy}, the conditional entropy $H(Z_T|Y;\theta)$ is differentiable. 
\end{proof}

To apply the gradient algorithm and analyze its convergence, the gradient and Hessian of the conditional entropy with respect to the policy parameters is required.
In the next proposition, we will give the gradient and Hessian of conditional entropy. 
 The gradient of conditional entropy is calculated as  
\begin{equation}
\label{eq:simplified_gradient}
\begin{aligned}
 &\nabla_\theta H(Z_T|Y;\theta) \\
= & - \sum_{y \in \mathcal{O}^T} \sum_{z_T \in \{0,1\}} \Big[\nabla_\theta P_\theta(z_T, y) \log P_\theta(z_T | y) \\
&+  P_\theta(z_T, y) \nabla_\theta  \log P_\theta(z_T | y)\Big].
\end{aligned}
\end{equation}

Then to calculate the gradient of conditional entropy $\nabla_\theta H(Z_T|Y;\theta)$ by equation~\eqref{eq:simplified_gradient}, we need to calculate the gradient $\nabla_\theta \log P_\theta (z_T|y)$.

By equation~\eqref{eq:matrix_operation}, the gradient $\nabla_\theta P_\theta(y)$ can be calculated as
\begin{equation}
\label{eq:gradient_observations}
\nabla_\theta P_\theta(y) = \sum_{i = 0}^t \mathbf{1}_N^\top \matA_{o_t}^\theta \dots \nabla_\theta \matA_{o_i}^\theta \dots \matA_{o_0}^\theta \mu_0.
\end{equation}
The log gradient $\nabla_\theta \log P_\theta(y)$ is
\begin{equation}
\nabla_\theta \log P_\theta(y) = \frac{1}{\ln 2 \cdot P_\theta(y)} \nabla_\theta P_\theta(y)
\end{equation}

Let $y_{-1} \coloneqq o_{[0:T-1]}$. To calculate the gradient $\nabla_\theta P_\theta(z_T| y)$, we first calculate the probability
\begin{equation}
P_\theta(Z_{T} = 1, y_{-1}) = \sum_{s_{T} \in W} P_\theta(s_{T}, y_{-1}),
\end{equation}
where $W$ is the set of secret defined at the beginning of Defintion~\ref{def:secrets}. Then the conditional probability
\begin{equation}
\label{eq:P_zT_yn1}
P_\theta(Z_{T} = 1| y) = \sum_{s_{T} \in W} \frac{\obs(o_T|s_T) P_\theta(s_{T}, y_{-1})}{P_\theta(y)}.
\end{equation}
Since $P_\theta(z_T, y) = P_\theta(z_T| y) P_\theta(y)$, we can calculate $P_\theta(z_T, y)$ by equation~\eqref{eq:P_zT_yn1}. By this result, we also can calculate the gradient $\nabla_\theta P_\theta(Z_{T} = 1| y)$ as
\begin{equation}
\label{eq:HMM_gradient_P_zT_y}
\begin{aligned}
&\nabla_\theta P_\theta(Z_{T}=1 |y) = \sum_{s_{T} \in W} \obs(o_T|s_T) \nabla_\theta \frac{ P_\theta(s_{T}, y_{-1}) }{P_\theta(y)}\\
&= \sum_{s_{T} \in W} \obs(o_T|s_T) \Big[ \frac{\nabla_\theta  P_\theta(s_{T}, y_{-1})}{P_\theta(y)}  - \frac{ P_\theta(s_{T}, y_{-1})}{ P_\theta^2(y)} \nabla_\theta P_\theta(y) \Big].
\end{aligned}
\end{equation}
To calculate the gradient $\nabla_\theta P_\theta(Z_{T} = 0 |y)$, note that $P_\theta(Z_{T} = 0|y) = 1 - P_\theta(Z_{T} = 1|y)$. Then $\nabla_\theta P_\theta(Z_{T} = 0|y) = -\nabla_\theta P_\theta(Z_{T} = 1|y)$. Since both gradients are equal, we use $\nabla_\theta P_\theta(z_T|y)$ to represent them. 

After computing $\nabla_\theta P_\theta(Z_T = 1 |y), \nabla_\theta P_\theta(Z_T = 0 |y)$, and $\nabla_\theta P(y)$, we can obtain the value of $\nabla_\theta H(Z_T|Y;\theta)$ by equation \eqref{eq:simplified_gradient}. 

It is noted that, though $\mathcal{O}^T$ is a finite set of observations, it is combinatorial and may be too large to enumerate. To mitigate this issue, we can employ sample approximations to estimate   $\nabla_\theta H(Z_T|Y; \theta)$. 

The conditional entropy $H(Z_T|Y;\theta)$ can be written as
\begin{equation}
\begin{aligned}
H (Z_T|Y;\theta) &= \expect_y [ H (Z_T|Y = y;\theta)] \\ &=\expect_y [ \sum_{z_T \in \{0,1\}} P_\theta(z_T| y) \log P_\theta(z_T | y) ].
\end{aligned}
\end{equation}
Then given $M$ sequences of observations $\{y_1, \dots, y_M\}$, we can approximate $ H(Z_T|Y;\theta)$ as
\begin{equation}
\label{eq:HMM_approx_entropy}
H (Z_T|Y;\theta) \approx - \frac{1}{M} \sum_{k=1}^M \sum_{z_T \in \{0,1\}} P_\theta(z_T| y_k) \log P_\theta(z_T | y_k).
\end{equation}
Similarly, we can approximate $\nabla H(Z_T|Y;\theta)$ by
\begin{equation}
\label{eq:HMM_approx_gradient_entropy}
\begin{aligned}
&\nabla_\theta H (Z_T|Y;\theta) \\
&\approx - \frac{1}{M} \sum_{k=1}^M \sum_{z_T \in \{0,1\}} \big[ \log P_\theta(z_T | y_k) \nabla_\theta P_\theta(z_T| y_k) \\
& + P_\theta(z_T| y_k) \log P_\theta(z_T | y_k) \nabla_\theta \log P_\theta(y_k) + \frac{\nabla_\theta P_\theta(z_T | y_k)}{\ln 2} \big].
\end{aligned}
\end{equation}
See the derivation in Appendix~\ref{app:approximation}.
Also, according to~\cite{Sutton1999policy}, we can estimate the gradient of the value function $\nabla_\theta V(\mu_0, \theta)$ by 
\begin{equation}
\label{eq:estimation_gradient_value_function}
\nabla_\theta V(s_0, \theta) \approx \frac{1}{M} \sum_{k=1}^M G_T \nabla_\theta \log \pi_\theta(a|s)].
\end{equation}
\subsubsection{The case with initial-state opacity}

The initial-state opacity is measured by the conditional entropy $H(S_0|Y;\theta)$ of the initial state $S_0$ given by the observation sequence $Y$.
\begin{equation}
\label{eq:HMM_entropy_2}
H(S_0|Y;\theta) = - \sum_{y \in \mathcal{O}^T} \sum_{s_0 \in \mathcal{S}} P_\theta(s_0, y) \log P_\theta(s_0 | y). 
\end{equation}

The constrained opacity-enforcement planning for initial-state opacity can be formulated similarly as a constrained optimization problem:
\begin{equation}
\label{eq:back_opt_problem_1}
\begin{aligned}
& \optmax_\theta && H (S_0|Y;\theta) \\
&\optst && V(s_0,\theta) \ge \zeta,
\end{aligned} 
\end{equation}
where $s_0\in \mathcal{S}$ is the initial state sampled from $\mu_0$. In this optimization problem, we assume that $s_0$ is known to the planning agent but not the observer.  


Following the similar primal-dual optimization approach as in \eqref{eq:primal_dual_algorithm},  we calculate the gradient of $H(S_0|Y;\theta)$ w.r.t. the policy parameter as follows:
\begin{equation}
\label{eq:HMM_gradient_entropy_inital}
\begin{aligned}
 &\nabla_\theta H(S_0|Y;\theta) 
=  - \sum_{y \in \mathcal{O}^T} \sum_{s_0 \in \mathcal{S}} \Big[\nabla_\theta P_\theta(s_0, y) \log P_\theta(s_0 | y) \\
&+  P_\theta(s_0, y) \nabla_\theta  \log P_\theta(s_0 | y)\Big].
\end{aligned}
\end{equation}

The computations of $P_\theta(y)$ and $\nabla_\theta \log P_\theta(y)$ are the same as those for the last-state opacity (see section~\ref{subsec:compute_gradient_last}). The main difference is that,  for the initial-state opacity, we need to obtain the $P_\theta(s_0|y)$ and its derivative w.r.t. $\theta$.


From Bayes' theorem, 
\begin{equation}
\label{eq:bayes_rule}
P_\theta(s_0|y) = \frac{P_\theta(y|s_0) \mu_0(s_0)}{P_\theta(y)}.
\end{equation}
Note that $\mu_0$ is the prior distribution of the initial state, which is known and does not depend on $\theta$. 
Thus, the gradient of $P_\theta(s_0|y)$ w.r.t. $\theta$ is given by
\begin{equation}
\label{eq:gradient_bayes_rule}
\begin{aligned}
\nabla_\theta P_\theta(s_0|y) =  \frac{\mu_0(s_0)}{P_\theta(y)} \nabla_\theta P_\theta(y|s_0) 
- \frac{\mu_0(s_0) P_\theta(y|s_0)}{P_\theta^2(y)} \nabla_\theta P_\theta(y).
\end{aligned}
\end{equation}
And the corresponding log gradient
\begin{equation}
\nabla \log P_\theta(s_0|y) = \nabla_\theta P_\theta(s_0|y) / P_\theta(s_0|y).
\end{equation}
The calculation of gradient $\nabla_\theta P_\theta(y)$ is shown in equation~\eqref{eq:gradient_observations}. For gradient $\nabla_\theta P_\theta(s_0|y)$, we can also compute it using the following proposition.

\begin{proposition}
\label{prop:initial_opacity_probability} 
For a fixed initial state $s_{0} \in \calS$,
\begin{equation}
\label{eq:matrix_operation_s0}
P_\theta(y|s_0) = \mathbf{1}_{N}^\top \matA_{o_t}^\theta \dots \matA_{o_0}^\theta \mathbf{1}_{s_0}.
\end{equation}
where $\mathbf{1}_{N}$ is a vector of size $N$ with all entries equal to one and $\mathbf{1}_{s_0}$ is a one-hot vector which assigns 1 to the $s_0$-th entry.
\end{proposition}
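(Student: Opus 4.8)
The statement to prove is Proposition~\ref{prop:initial_opacity_probability}: that $P_\theta(y \mid s_0) = \mathbf{1}_N^\top \matA_{o_t}^\theta \cdots \matA_{o_0}^\theta \mathbf{1}_{s_0}$. The plan is to mirror the argument already used (and cited) for Proposition~\ref{prop:last_opacity_probability}, but starting the forward recursion from a fixed, deterministic initial state rather than from the distribution $\mu_0$. First I would unfold the joint probability of a state-observation path conditioned on $S_0 = s_0$. Using the HMM factorization of $HM_\theta$ given in the preliminaries — namely $P_\theta(S_{t+1}=s' \mid S_t = s) = \matT^\theta_{s',s}$ and $P_\theta(O_t = o \mid S_t = s) = \matO_{o,s} = E(o \mid s)$ — the probability of a fixed path $s_0, s_1, \dots, s_t$ emitting the observation sequence $o_{[0:t]}$ is the telescoping product $\matO_{o_0,s_0}\,\matT^\theta_{s_1,s_0}\,\matO_{o_1,s_1}\cdots\matT^\theta_{s_t,s_{t-1}}\,\matO_{o_t,s_t}$. (One must be careful about the indexing convention: in this paper the observable operator $\matA_o^\theta = \matT^\theta\,\mathrm{diag}(\matO_{o,1},\dots,\matO_{o,N})$ emits observation $o$ at the \emph{source} state of the transition, so $\matA_o^\theta[i,j] = \matT^\theta_{i,j}\matO_{o,j}$ — the emission indices line up with the pre-transition state.)

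The second step is to sum over all intermediate states $s_1, \dots, s_t$ and recognize the result as iterated matrix-vector multiplication. Summing over $s_1$ collapses $\sum_{s_1} \matT^\theta_{s_1,s_0}\matO_{o_1,s_1}(\cdots)$ into one application of $\matA_{o_1}^\theta$ acting on the right on the vector built so far, and so on. Concretely, I would show by induction on $t$ that the partial-path vector $v_t \in \reals^N$ with entries $v_t[s] = P_\theta(S_t = s,\, o_{[0:t-1]} \mid S_0 = s_0)$ satisfies $v_0 = \mathbf{1}_{s_0}$ and $v_{k+1} = \matA_{o_k}^\theta v_k$; then $P_\theta(o_{[0:t]} \mid s_0) = \sum_{s_t} \matO_{o_t,s_t} \cdot (\text{contribution})$ reassembles into $\mathbf{1}_N^\top \matA_{o_t}^\theta \cdots \matA_{o_0}^\theta \mathbf{1}_{s_0}$, the left multiplication by $\mathbf{1}_N^\top$ performing the final marginalization over $S_t$. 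The only substantive bookkeeping is confirming that the final observation $o_t$ is correctly absorbed — it is, because $\matA_{o_t}^\theta$ carries the factor $\matO_{o_t,s_{t-1}}$ at each source state and the left-multiplication by $\mathbf{1}_N^\top$ sums over the terminal state without further emission, consistent with horizon $T$ producing $T+1$ observations $O_{[0:T]}$ but $T$ applied operators in the stated formula (matching exactly the pattern of \eqref{eq:matrix_operation} and \eqref{eq:matrix_operation_sT}).

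Actually the cleanest route is to avoid redoing the induction from scratch: since Proposition~\ref{prop:last_opacity_probability} already establishes $P_\theta(s_{T+1}, y) = \mathbf{1}_{s_{T+1}}^\top \matA_{o_t}^\theta \cdots \matA_{o_0}^\theta \mu_0$, I can instantiate that identity with $\mu_0$ replaced by the Dirac mass $\mathbf{1}_{s_0}$ (which corresponds to conditioning on $S_0 = s_0$, since the forward recursion is linear in the initial vector), giving $P_\theta(y, S_0 = s_0) / 1 = \mathbf{1}_N^\top \matA_{o_t}^\theta \cdots \matA_{o_0}^\theta \mathbf{1}_{s_0}$ after replacing the terminal one-hot selector by $\mathbf{1}_N$ to marginalize over the last state instead of fixing it. Dividing by $\mu_0(s_0)$ — or simply noting $P_\theta(y \mid s_0) = P_\theta(y, s_0)/\mu_0(s_0)$ and that the right-hand side scales linearly — yields the claim. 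I would present this as the short argument and relegate the explicit telescoping induction to a footnote or omit it by citation to \cite{jaeger2000observableoperator}. The main (minor) obstacle is purely notational: getting the emission-index alignment right in $\matA_o^\theta[i,j] = \matT^\theta_{i,j}\matO_{o,j}$ so that the number of emitted observations, the horizon $T$, and the count of operator factors are mutually consistent; once that convention is pinned down, the proof is a one-line specialization of the previously proven proposition.
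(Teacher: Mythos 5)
Your proposal is correct and its final, ``cleanest route'' is exactly the paper's own proof: the paper simply states that equation~\eqref{eq:matrix_operation_s0} follows by replacing the initial distribution $\mu_0$ with the one-hot vector $\mathbf{1}_{s_0}$ in equation~\eqref{eq:matrix_operation}. The additional telescoping induction you sketch is a valid (and more explicit) justification of that substitution, but it is not needed beyond what the paper already cites from the observable-operator literature.
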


\begin{proof}
The equation~\eqref{eq:matrix_operation_s0}
is derived by replacing the initial distribution $\mu_0$ with the one-hot distribution in the equation~\eqref{eq:matrix_operation}. 
\end{proof}

Then by equation~\eqref{eq:matrix_operation_s0}, the gradient $\nabla_\theta P_\theta(y|s_0)$ can be calculated as
\begin{equation}
\nabla_\theta P_\theta(y|s_0) = \sum_{i = 0}^t \mathbf{1}_N^\top \matA_{o_t}^\theta \dots \nabla_\theta \matA_{o_i}^\theta \dots \matA_{o_0}^\theta \mathbf{1}_{s_0}.
\end{equation}

  In this way, $H(S_0|Y;\theta)$, $\nabla_\theta H(S_0|Y;\theta)$ and  $ \nabla_\theta^2 H(S_0|Y;\theta) $  can be computed exactly.  Similar sample approximations for the last-state-opacity case can be used to estimate these values and thus we omit the derivation.  


With the above step of calculating the gradient of conditional entropy $H (S_0|Y;\theta)$ w.r.t. $\theta$, we can then employ the primal-dual approach to solve a (locally) optimal solution to Problem~\ref{problem:initial-state}.


\section{Convergence Guarantee of the Primal-Dual Policy Gradient Method}
\label{sec:convergence_proof}
Next, we will show that the conditional entropy $H(Z_T|Y;\theta)$ is Lipschitz continuous and L-smooth in the policy parameter $\theta$, provided that the following condition holds for the policy function space.


\begin{assumption}
\label{assume:bounded-policy-gradient}
 For any $(s,a)\in \mathcal{S} \times \mathcal{A}$, both $\nabla_\theta \log \pi_\theta(a | s) $ and $\nabla_\theta^2 \log \pi_\theta(a| s) $ are bounded. 
\end{assumption}

The above assumption is mild, as it is satisfied by several common policy parameterizations. For example, the gradient and Hessian of the softmax policy are bounded when the parameter $\theta$ lies in a compact set~\cite{wei2025activeinferenceincentivedesign}, as specified in equation~\eqref{eq:primal_dual_algorithm}.

\begin{lemma}
\label{lem:first_order}
Under Assumption~\ref{assume:bounded-policy-gradient}, the gradients $\nabla_\theta P_\theta(s_T, y)$ and $\nabla_\theta P_\theta(y)$    are bounded.
\end{lemma}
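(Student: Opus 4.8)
The plan is to express each gradient as a finite sum of products of observable operators and their derivatives, and then bound each factor individually using Assumption~\ref{assume:bounded-policy-gradient}. First I would recall from equation~\eqref{eq:matrix_operation} that $P_\theta(y) = \mathbf{1}_N^\top \matA_{o_t}^\theta \cdots \matA_{o_0}^\theta \mu_0$ and from equation~\eqref{eq:gradient_observations} that $\nabla_\theta P_\theta(y)$ is a sum of $T+1$ terms, each obtained by replacing exactly one factor $\matA_{o_i}^\theta$ with $\nabla_\theta \matA_{o_i}^\theta$; the same structure holds for $P_\theta(s_T,y)$ via equation~\eqref{eq:matrix_operation_sT}. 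So it suffices to bound $\norm{\matA_{o}^\theta}$ and $\norm{\nabla_\theta \matA_{o}^\theta}$ uniformly in $\theta$ (over the compact set $[\theta_{\min},\theta_{\max}]^d$ from equation~\eqref{eq:primal_dual_algorithm}), since a finite product of bounded matrices is bounded and the horizon $T$ is fixed and finite.

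The key steps, in order: (1) Since $\matA_{o}^\theta[i,j] = \matO_{o,j}\sum_a P(i|j,a)\pi_\theta(a|j)$ is an entrywise product of nonnegative numbers bounded by $1$ (probabilities), $\matA_{o}^\theta$ has entries in $[0,1]$, so $\norm{\matA_{o}^\theta}$ is bounded by a constant depending only on $N$. (2) For the derivative, $\nabla_\theta \matA_{o}^\theta[i,j] = \matO_{o,j}\sum_a P(i|j,a)\nabla_\theta \pi_\theta(a|j)$, and I would write $\nabla_\theta \pi_\theta(a|j) = \pi_\theta(a|j)\,\nabla_\theta \log \pi_\theta(a|j)$, which is bounded in norm by $\norm{\nabla_\theta \log \pi_\theta(a|j)}$ (since $\pi_\theta(a|j)\le 1$); this is bounded by Assumption~\ref{assume:bounded-policy-gradient}. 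Hence every entry of $\nabla_\theta \matA_{o}^\theta$ is bounded by $|\mathcal{A}|\cdot B$ for the uniform bound $B$ on $\norm{\nabla_\theta \log \pi_\theta}$, so $\norm{\nabla_\theta \matA_{o}^\theta}$ is uniformly bounded. (3) Apply submultiplicativity: each summand in equation~\eqref{eq:gradient_observations} is bounded by $\norm{\mathbf{1}_N}\cdot \big(\max_o\norm{\matA_{o}^\theta}\big)^{T}\cdot \big(\max_o\norm{\nabla_\theta\matA_{o}^\theta}\big)\cdot\norm{\mu_0}$, a fixed finite constant; summing $T+1$ such terms keeps it bounded. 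The identical argument with $\mathbf{1}_{s_T}$ in place of $\mathbf{1}_N$ and $\mu_0$ unchanged handles $\nabla_\theta P_\theta(s_T,y)$.

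I do not expect a serious obstacle here — the statement is essentially bookkeeping once the product structure is exposed. The one point requiring slight care is making the bounds \emph{uniform in $\theta$}: the bound on $\norm{\nabla_\theta\log\pi_\theta}$ must hold across the whole compact parameter box, which is exactly what Assumption~\ref{assume:bounded-policy-gradient} (together with the projection onto $[\theta_{\min},\theta_{\max}]^d$) provides, so I would state this explicitly. A secondary minor point is that although $P_\theta(y)$ itself is automatically in $[0,1]$, its gradient need not be small; the argument above shows it is nonetheless bounded by a constant determined by $N$, $|\mathcal{A}|$, $T$, and $B$, which is all that is claimed.
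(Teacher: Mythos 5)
Your proof is correct, and the route differs from the paper's only in presentation: the paper does not work with the unrolled operator product from equation~\eqref{eq:gradient_observations} but instead introduces the forward $\alpha$-messages $\alpha_t(j,\theta) = P_\theta(y_{[0:t]}, S_t = j)$ of the HMM, derives the recursion $\nabla_\theta \alpha_t(j,\theta) = \sum_i P_\theta(i,j) b_j(o_t)\nabla_\theta\alpha_{t-1}(i,\theta) + \sum_i \alpha_{t-1}(i,\theta) b_j(o_t)\nabla_\theta P_\theta(i,j)$ with $\nabla_\theta\alpha_0 = 0$, and concludes by induction that $\nabla_\theta\alpha_T$ is a finite sum of bounded terms. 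The key step is identical in both arguments: writing $\nabla_\theta P_\theta(i,j) = \sum_a P(j|i,a)\pi_\theta(a|i)\nabla_\theta\log\pi_\theta(a|i)$ (your step (2) for $\nabla_\theta\matA_o^\theta$) and invoking Assumption~\ref{assume:bounded-policy-gradient}; the $\alpha$-recursion is simply the iterative evaluation of your matrix product, so the two proofs are mathematically the same computation. What each buys: your product-plus-submultiplicativity version yields an explicit constant in terms of $N$, $|\mathcal{A}|$, $T$ and the bound $B$, and your remark about uniformity over the compact box $[\theta_{\min},\theta_{\max}]^d$ is a point the paper leaves implicit and is worth stating; the paper's recursive form is chosen because it extends verbatim to the Hessian bound in Lemma~\ref{lem:second_order} (differentiate the recursion once more) and to the backward $\beta$-message argument used for initial-state opacity in the appendix, where the same induction template is reused. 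No gap in your argument.
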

\begin{proof}
First, we introduce the concept of forward $\alpha$ messages from \ac{hmm}s \cite{baum1970maximization}. For clarity in notation, we consider a finite state set $\mathcal{S}=\{1,\ldots, N\}$. The emission probability distribution $b_{i}(o)= \obs(o|i)$. Given a fixed observation sequence $y$, the forward $\alpha$ message at the time step $t$ for a given state $j \in \calS$ is, 
\begin{equation}
\label{eq:forward_path_probability}
\alpha_t(j, \theta) \coloneqq P_\theta(y, S_t = j),
\end{equation}
which is the joint probability of receiving observation $y$ and arriving at state $j$ at the $t$-th time step. Note that $\nabla_\theta P_\theta(s_T, y) = \nabla_\theta \alpha_T(j, \theta)$. Thus, we want to prove $\nabla_\theta \alpha_T(j, \theta)$ is bounded. 

The forward $\alpha$ messages can be calculated recursively by the following equations \cite{baum1970maximization}: for $t=1,\ldots T$, 
\begin{equation}
\label{eq:update_forward_path_probability}
\alpha_t(j, \theta) = \sum_{i \in \mathcal{S}} \alpha_{t-1}(i, \theta) P_\theta(i,j) b_j(o_t), 
\end{equation}
 
The initial forward $\alpha$ message is defined jointly by the initial observation and the initial state distribution,
$\alpha_0(j,\theta) = \mu_0(j) b_j(o_0), 0 \le j \le N$. 

Then we can also compute these gradients using the following recursive computation based on  \eqref{eq:update_forward_path_probability}: For $1 \le t \le T$,
\begin{equation}
\label{eq:update_gradient_forward_path_probability}
\begin{aligned}
&\nabla_\theta \alpha_t(j, \theta) = \sum_{i=1}^N P_\theta(i,j) b_j(o_t)\nabla_\theta \alpha_{t-1}(i, \theta) \\
&+ \sum_{i=1}^N \alpha_{t-1}(i, \theta) b_j(o_t) \nabla_\theta P_\theta(i,j),
\end{aligned}
\end{equation}
and $\nabla_\theta \alpha_0(j, \theta) = 0$ since $\alpha_0(j,\theta)$ does not depend on $\theta$. In addition, the gradient
\begin{equation}
\label{eq:grad_P_i_j_calculation}
\begin{aligned}
\nabla_\theta P_\theta(i,j) &= \sum_{a \in \mathcal{A}} P(j|i,a) \nabla_\theta \pi_\theta(a|i)\\
&= \sum_{a \in \mathcal{A}} P(j|i,a) \pi_\theta(a|i) \nabla_\theta \log \pi_\theta(a|i)
\end{aligned}
\end{equation}
can be computed using the current policy $\pi_\theta$ and the gradient of the policy w.r.t. $\theta$. Since $\nabla_\theta \log \pi_\theta(a|i)$ is bounded, $\nabla_\theta P_\theta(i,j)$ is bounded by the triangle inequality, i.e.,
\begin{equation}
\| \nabla_\theta P_\theta(i,j) \| \le \sum_{a \in \mathcal{A}} P(j|i,a) \pi_\theta(a|i) \| \nabla_\theta \log \pi_\theta(a|i) \|.
\end{equation}

And according to the recursion~\eqref{eq:update_gradient_forward_path_probability}, $\nabla_\theta \alpha_T(j, \theta)$ is bounded because it is a finite summation of bounded gradients. Therefore, $\nabla_\theta P_\theta(s_T, y)$ is bounded.

Since the probability of receiving observation $y$ is $P_\theta(y) = \sum_{i \in\mathcal{S}} \alpha_T(i, \theta)$ and  $\nabla_\theta \alpha_T(i, \theta)$ is bounded, the gradient  $\nabla_\theta P_\theta(y) = \sum_{i \in \mathcal{S}} \nabla_\theta \alpha_T(i, \theta)$ is bounded.
\end{proof}

Next, we need to calculate some Hessian matrices for the proof of Lipschitz continuity. We list them here. Given an observation sequence $y$ and $z_T\in W$ such that  $P_\theta(z_T|y) \neq 0$ and $P_\theta(y) \neq 0$. We can derive the Hessian of conditional entropy from equation~\eqref{eq:simplified_gradient}.
\begin{equation}
\begin{aligned}
\label{eq:simplified_hessian}
 &\nabla_\theta^2 H(Z_T|Y;\theta) \\
= & - \sum_{y \in \mathcal{O}^T} \sum_{z_T \in \{0,1\}} \Big[\nabla_\theta^2 P_\theta(z_T, y) \log P_\theta(z_T | y) \\
&+ \nabla_\theta P_\theta(z_T, y) \nabla_\theta  \log P_\theta(z_T | y)^\top 
\\
&+ P_\theta(z_T, y) \nabla_\theta^2 \log P_\theta(z_T | y)
\\
&+ \nabla_\theta  \log P_\theta(z_T | y) \nabla_\theta P_\theta(z_T, y)^\top \Big].
\end{aligned}
\end{equation}
And the Hessian of $\log P_\theta(z_T|y)$ can be calculated as
\begin{equation}
\label{eq:hessian_log_P_zT_y}
\begin{aligned}
\nabla_\theta^2 \log P_\theta(z_T|y) = \frac{\nabla_\theta^2 P_\theta(z_T|y)}{P_\theta(z_T|y)}  - \frac{\nabla_\theta P_\theta(z_T|y) \nabla_\theta P_\theta(z_T|y)^\top}{P_\theta^2(z_T|y)} 
\end{aligned}
\end{equation}
where
\begin{equation}
\label{eq:hessian_p_z_y_prop}
\begin{aligned}
\nabla_\theta^2 & P_\theta(z_T|y) = \sum_{s_{T} \in W} \Big[ \frac{\nabla_\theta^2 P_\theta(s_{T}, y)}{P_\theta(y)} - \frac{\nabla_\theta P_\theta(s_{T}, y) \nabla_\theta P_\theta(y)^\top}{P_\theta^2(y)} \\
& - \frac{\nabla_\theta P_\theta(y) \nabla_\theta P_\theta(s_{T}, y)^\top}{P_\theta^2(y)} - \frac{P_\theta(s_{T}, y) \nabla_\theta^2 P_\theta(y)}{P_\theta^2(y)} \\ 
& + \frac{2 P_\theta(s_{T}, y) \nabla_\theta P_\theta(y) \nabla_\theta P_\theta(y)^\top}{P_\theta^3(y)} \Big].
\end{aligned}
\end{equation}

\begin{lemma}
\label{lem:second_order}
Under Assumption~\ref{assume:bounded-policy-gradient}, the Hessian of probability $\nabla_\theta^2 P_\theta(s_T, y)$ and $\nabla_\theta^2 P_\theta(y)$ are bounded.
\end{lemma}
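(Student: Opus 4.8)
The plan is to replay the argument of Lemma~\ref{lem:first_order} one derivative higher: differentiate the forward recursion~\eqref{eq:update_forward_path_probability} — equivalently its first‑order version~\eqref{eq:update_gradient_forward_path_probability} — a second time in $\theta$. Fix the observation sequence $y=o_{[0:T]}$. Applying the product rule to~\eqref{eq:update_gradient_forward_path_probability} yields, for $1\le t\le T$,
\begin{equation}
\begin{aligned}
\nabla_\theta^2 \alpha_t(j,\theta) ={}& \sum_{i=1}^N P_\theta(i,j)\, b_j(o_t)\, \nabla_\theta^2 \alpha_{t-1}(i,\theta) \\
& + \sum_{i=1}^N b_j(o_t)\Big[ \nabla_\theta \alpha_{t-1}(i,\theta)\, \big(\nabla_\theta P_\theta(i,j)\big)^\top + \nabla_\theta P_\theta(i,j)\, \big(\nabla_\theta \alpha_{t-1}(i,\theta)\big)^\top \Big] \\
& + \sum_{i=1}^N \alpha_{t-1}(i,\theta)\, b_j(o_t)\, \nabla_\theta^2 P_\theta(i,j),
\end{aligned}
\end{equation}
with base case $\nabla_\theta^2 \alpha_0(j,\theta)=0$, since $\alpha_0(j,\theta)=\mu_0(j) b_j(o_0)$ is $\theta$‑independent.

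Next I would bound each ingredient uniformly over the compact box $[\theta_{\min},\theta_{\max}]^d$. The scalars $\alpha_{t-1}(i,\theta)$, $b_j(o_t)$, $P_\theta(i,j)$ all lie in $[0,1]$ as (sub)probabilities; the first‑order quantities $\nabla_\theta \alpha_{t-1}(i,\theta)$ and $\nabla_\theta P_\theta(i,j)$ are bounded by Lemma~\ref{lem:first_order} and~\eqref{eq:grad_P_i_j_calculation}, so the outer‑product cross terms are bounded; and from $P_\theta(i,j)=\sum_a P(j|i,a)\pi_\theta(a|i)$ together with the identity $\nabla_\theta^2 \pi_\theta(a|i)=\pi_\theta(a|i)\big[\nabla_\theta^2 \log\pi_\theta(a|i)+\nabla_\theta \log\pi_\theta(a|i)\,(\nabla_\theta \log\pi_\theta(a|i))^\top\big]$, Assumption~\ref{assume:bounded-policy-gradient} gives that $\nabla_\theta^2 P_\theta(i,j)=\sum_a P(j|i,a)\nabla_\theta^2 \pi_\theta(a|i)$ is bounded. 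Hence there is a constant $C$, independent of $\theta$, with $\|\nabla_\theta^2 \alpha_t(j,\theta)\|\le \sum_i P_\theta(i,j) b_j(o_t)\,\|\nabla_\theta^2 \alpha_{t-1}(i,\theta)\| + C \le \max_i \|\nabla_\theta^2 \alpha_{t-1}(i,\theta)\| + C$, using $\sum_i P_\theta(i,j)=1$ and $b_j(o_t)\le 1$. Inducting over the finite horizon from $\nabla_\theta^2 \alpha_0=0$ gives $\|\nabla_\theta^2 \alpha_T(j,\theta)\|\le TC$, and since $\nabla_\theta^2 P_\theta(s_T,y)=\nabla_\theta^2 \alpha_T(s_T,\theta)$ and $\nabla_\theta^2 P_\theta(y)=\sum_{i\in\mathcal{S}}\nabla_\theta^2 \alpha_T(i,\theta)$, both Hessians are bounded.

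There is no substantive obstacle here — the argument is essentially the bookkeeping analogue of Lemma~\ref{lem:first_order} — but the points requiring care are: tracking the three qualitatively different second‑order terms in the recursion (the pure $\nabla_\theta^2 \alpha_{t-1}$ term, the $\nabla_\theta \alpha_{t-1}\otimes\nabla_\theta P_\theta$ cross terms, and the $\alpha_{t-1}\nabla_\theta^2 P_\theta$ term); writing the Hessian‑of‑policy identity correctly so that Assumption~\ref{assume:bounded-policy-gradient} applies; and ensuring every bound is uniform in $\theta$ over the compact parameter set, which is precisely where the bounded‑Hessian half of Assumption~\ref{assume:bounded-policy-gradient} (e.g.\ softmax parameterization on a compact set) enters. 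Finiteness of $T$ then prevents the accumulated bound from diverging.
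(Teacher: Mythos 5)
Your proposal is correct and follows essentially the same route as the paper: differentiate the forward-message recursion a second time, bound the resulting three types of terms (pure second-order, cross, and $\nabla_\theta^2 P_\theta(i,j)$ terms) via Assumption~\ref{assume:bounded-policy-gradient} and Lemma~\ref{lem:first_order}, and conclude by finite induction over the horizon; your symmetrized cross term is in fact the more careful bookkeeping. One small slip: in the recursion $\alpha_t(j,\theta)=\sum_i \alpha_{t-1}(i,\theta)P_\theta(i,j)b_j(o_t)$ the quantity $\sum_i P_\theta(i,j)$ is a column sum of the transition matrix and need not equal $1$, but bounding it by $N$ still gives a finite (if larger) constant after $T$ steps, so the conclusion stands.
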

\begin{proof}
Follow the proof of Lemma~\ref{lem:first_order}, the Hessian matrix $\nabla_\theta^2 P_\theta(s_T, y) = \nabla_\theta^2 \alpha_T(j, \theta)$. Calculate the gradient of equation~\eqref{eq:update_gradient_forward_path_probability}, we obtain
\begin{equation}
\label{eq:update_hessian_forward_path_probability}
\begin{aligned}
&\nabla_\theta^2 \alpha_t(j, \theta) = \sum_{i=1}^N  b_j(o_t)\nabla_\theta \alpha_{t-1}(i, \theta) \nabla_\theta P_\theta(i,j)^\top \\
& + \sum_{i=1}^N P_\theta(i,j) b_j(o_t)\nabla_\theta^2 \alpha_{t-1}(i, \theta) \\
& + \sum_{i=1}^N \alpha_{t-1}(i, \theta) b_j(o_t) \nabla_\theta^2 P_\theta(i,j),
\end{aligned}
\end{equation}
and $\nabla_\theta^2 \alpha_0(j, \theta) = 0$ since $\nabla_\theta \alpha_0(j,\theta) = 0$. In addition, the Hessian
\begin{equation}
\label{eq:hessian_P_i_j_calculation}
\begin{aligned}
\nabla_\theta^2 & P_\theta(i,j) = \sum_{a \in \mathcal{A}} P(j|i,a) \nabla_\theta^2 \pi_\theta(a|i)\\
&= \sum_{a \in \mathcal{A}} P(j|i,a) \nabla_\theta [\pi_\theta(a|i) \nabla_\theta \log \pi_\theta(a|i)]\\
&= \sum_{a \in \mathcal{A}} P(j|i,a) \pi_\theta(a|i) \nabla_\theta \log \pi_\theta(a|i) \nabla_\theta \log \pi_\theta(a|i)^\top \\
& + \sum_{a \in \mathcal{A}} P(j|i,a) \pi_\theta(a|i) \nabla_\theta^2 \log \pi_\theta(a|i).
\end{aligned}
\end{equation}

Since $ \nabla_\theta \log \pi_\theta(a | s) $ and $ \nabla_\theta^2 \log \pi_\theta(a | s) $ are bounded, then $\nabla_\theta P_\theta(i,j)$ is bounded. According to the recursion~\eqref{eq:update_hessian_forward_path_probability},
$\nabla_\theta^2 \alpha_T(j, \theta)$ is finite because it is a finite summation of bounded gradients and Hessian matrices. Therefore, $\nabla_\theta^2 P_\theta(s_T, y)$ is bounded.
Further, since $\nabla_\theta^2 \alpha_T(k, \theta)$ is bounded for each $k \in \mathcal{S}$, the Hessian $\nabla_\theta^2 P_\theta(y) = \sum_{k \in \mathcal{S}} \nabla_\theta^2 \alpha_T(k, \theta)$ is bounded.
\end{proof}

\begin{theorem}
\label{thm:last_state_convergence}
Under Assumption~\ref{assume:bounded-policy-gradient}, 
the entropy $H(Z_T|Y; \theta)$ is Lipschitz-continuous and Lipschitz-smooth in $\theta$.
\end{theorem}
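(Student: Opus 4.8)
The plan is to reduce both claims to uniform boundedness of the first two $\theta$-derivatives of $H(Z_T|Y;\theta)$ over the compact, convex parameter box $[\theta_{\min},\theta_{\max}]^{d}$ of~\eqref{eq:primal_dual_algorithm}: on a convex set, $\sup_{\theta}\|\nabla_{\theta}H(Z_T|Y;\theta)\|<\infty$ gives Lipschitz continuity of $H(Z_T|Y;\cdot)$ by the mean value theorem, and $\sup_{\theta}\|\nabla_{\theta}^{2}H(Z_T|Y;\theta)\|<\infty$ gives Lipschitz continuity of $\nabla_{\theta}H(Z_T|Y;\cdot)$, i.e. $L$-smoothness. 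So it suffices to bound, uniformly in $\theta$, every factor appearing in the gradient formula~\eqref{eq:simplified_gradient} (expanded through~\eqref{eq:HMM_gradient_P_zT_y}) and in the Hessian formula~\eqref{eq:simplified_hessian} (expanded through~\eqref{eq:hessian_log_P_zT_y} and~\eqref{eq:hessian_p_z_y_prop}).

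Most factors are controlled immediately: Lemma~\ref{lem:first_order} bounds $\nabla_{\theta}P_{\theta}(y)$ and $\nabla_{\theta}P_{\theta}(s_T,y)$, hence also $\nabla_{\theta}P_{\theta}(z_T,y)$ (a nonnegative emission-weighted combination of the $\nabla_{\theta}P_{\theta}(s_T,\cdot)$ with weights $\obs(o_T|s_T)\le 1$), and Lemma~\ref{lem:second_order} bounds the matching Hessians $\nabla_{\theta}^{2}P_{\theta}(y)$, $\nabla_{\theta}^{2}P_{\theta}(s_T,y)$, $\nabla_{\theta}^{2}P_{\theta}(z_T,y)$; every probability lies in $[0,1]$; and $\mathcal{O}^{T}$ is a finite index set. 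Inspecting~\eqref{eq:HMM_gradient_P_zT_y}, \eqref{eq:hessian_log_P_zT_y}, \eqref{eq:hessian_p_z_y_prop}, the only quantities not yet bounded are the logarithmic factor $\log P_{\theta}(z_T|y)$ and the reciprocals $1/P_{\theta}(y)$ and $1/P_{\theta}(z_T|y)$ produced by differentiating the Bayes quotient.

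The heart of the argument is to bound these singular factors uniformly in $\theta$, i.e. to keep $P_{\theta}(y)$ and $P_{\theta}(z_T|y)$ away from $0$ (and, for the conditional, away from $1$). I would argue that Assumption~\ref{assume:bounded-policy-gradient}, together with $\theta$ being confined to the compact box (which is exactly the regime in which the softmax gradient and Hessian used in the paper are bounded), forces $\inf_{\theta}\pi_{\theta}(a|s)>0$ for every $(s,a)$; therefore the support of the induced HMM -- in particular the finite set $\mathcal{Y}^{+}=\{y\in\mathcal{O}^{T}:P_{\theta}(y)>0\}$, and for each such $y$ whether $P_{\theta}(Z_T{=}1,y)>0$ and whether $P_{\theta}(Z_T{=}0,y)>0$ -- is the same for all $\theta$. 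Summands with $y\notin\mathcal{Y}^{+}$ are identically zero in $H$ and in its derivatives and may be dropped; for $y\in\mathcal{Y}^{+}$, $P_{\theta}(y)$ is a continuous, strictly positive function on the compact box, hence $\ge\delta>0$ uniformly, which bounds $1/P_{\theta}(y)$ and the factor $\nabla_{\theta}\log P_{\theta}(y)=\nabla_{\theta}P_{\theta}(y)/(\ln 2\cdot P_{\theta}(y))$; and $P_{\theta}(z_T|y)=P_{\theta}(z_T,y)/P_{\theta}(y)$ is, for each pair $(z_T,y)$, either identically $0$ or identically $1$ -- in which case the corresponding summand vanishes because $p\log p = 0$ at $p\in\{0,1\}$ and can be deleted -- or bounded uniformly inside $[\delta',1-\delta']$, which bounds $|\log P_{\theta}(z_T|y)|$ and $1/P_{\theta}(z_T|y)$.

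With these uniform bounds in hand, the conclusion is bookkeeping: \eqref{eq:HMM_gradient_P_zT_y} shows $\nabla_{\theta}P_{\theta}(z_T|y)$ is a finite sum of products of bounded quantities, hence bounded, so~\eqref{eq:simplified_gradient} bounds $\nabla_{\theta}H(Z_T|Y;\theta)$; feeding the same bounds, plus Lemma~\ref{lem:second_order}, through~\eqref{eq:hessian_p_z_y_prop}, \eqref{eq:hessian_log_P_zT_y}, and~\eqref{eq:simplified_hessian} bounds $\nabla_{\theta}^{2}H(Z_T|Y;\theta)$; and the mean-value step of the first paragraph then yields both Lipschitz continuity and $L$-smoothness. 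The main obstacle is precisely the third paragraph -- establishing that $P_{\theta}(y)$ and $P_{\theta}(z_T|y)$ stay uniformly separated from the $\log$/division singularities over the whole parameter set; once $\theta$-independence of the HMM support is secured, the remaining estimates reduce to maxima of continuous functions over a compact set times the finitely many summands.
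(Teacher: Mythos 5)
Your proposal is correct and follows the same overall route as the paper's own proof: reduce Lipschitz continuity and Lipschitz smoothness of $H(Z_T|Y;\theta)$ to uniform boundedness of its gradient and Hessian over the compact parameter box, and obtain those bounds by pushing Lemma~\ref{lem:first_order} and Lemma~\ref{lem:second_order} through equations~\eqref{eq:simplified_gradient}, \eqref{eq:HMM_gradient_P_zT_y}, \eqref{eq:hessian_log_P_zT_y}, \eqref{eq:hessian_p_z_y_prop} and~\eqref{eq:simplified_hessian}.

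The substantive difference is your third paragraph, and it works in your favor. The paper's proof passes from ``$\nabla_\theta P_\theta(s_T,y)$ and $\nabla_\theta P_\theta(y)$ are bounded'' to ``$\nabla_\theta \log P_\theta(z_T|y)=\nabla_\theta P_\theta(z_T|y)/P_\theta(z_T|y)$ is bounded'' without controlling the denominator, and it never addresses the factor $\log P_\theta(z_T|y)$ in~\eqref{eq:simplified_gradient} or the reciprocals $1/P_\theta(y)$ and $1/P_\theta(z_T|y)$ appearing in the Hessian formulas; it only stipulates, just before the Hessian computation, that these probabilities be nonzero, which by itself does not yield a uniform bound. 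Your argument supplies exactly the missing step: Assumption~\ref{assume:bounded-policy-gradient} presupposes $\pi_\theta(a|s)>0$ (otherwise $\nabla_\theta\log\pi_\theta(a|s)$ does not exist), and boundedness of that gradient over the box makes $\log\pi_\theta(a|s)$ Lipschitz, hence $\pi_\theta(a|s)$ uniformly bounded below; therefore the set of pairs $(z_T,y)$ with $P_\theta(z_T,y)>0$ is $\theta$-independent, the degenerate summands vanish identically in $\theta$ and can be discarded, and the surviving probabilities are continuous and strictly positive on a compact set, hence uniformly separated from the singularities of the logarithm and of division. Same decomposition, same key lemmas, but your version closes a gap that the paper's proof leaves open.
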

\begin{proof}
We prove that the entropy $H(Z_T|Y; \theta)$ is Lipschitz-continuous by proving the gradient $\nabla_\theta H(Z_T|Y; \theta)$ is bounded. From Lemma~\ref{lem:first_order}, $\nabla_\theta P_\theta(s_T, y)$ and $\nabla_\theta P_\theta(y)$ are bounded. 

Then since the gradient of $\log P_\theta(z_T|y)$ can be calculated as
\begin{equation}
\label{eq:grad_log_P_z_y}
\nabla_\theta \log P_\theta(z_T|y) = \nabla_\theta P_\theta(z_T|y) / P_\theta(z_T|y),
\end{equation}
we can derive that $\nabla_\theta \log P_\theta(z_T|y)$ is bounded. According to equation~\eqref{eq:simplified_gradient}, $\nabla_\theta H(Z_T|Y; \theta)$ is clearly bounded. 

Similarly, we can prove that $H(Z_T|Y; \theta)$ is Lipschitz-smooth by proving that the Hessian $\nabla_\theta^2 H(Z_T|Y; \theta)$ is bounded.  From Lemma~\ref{lem:second_order}, $\nabla_\theta^2 P_\theta(s_T, y)$ and $\nabla_\theta^2 P_\theta(y)$ are bounded. Thus, given \eqref{eq:hessian_log_P_zT_y} and \eqref{eq:hessian_p_z_y_prop} we can conclude that   $\nabla_\theta^2 \log P_\theta(z_T|y)$ is bounded.
According to equation~\eqref{eq:simplified_hessian}, $\nabla_\theta^2 H(Z_T|Y; \theta)$ is also bounded.  
\end{proof}

Next, we show that similar properties hold for initial-state opacity.

\begin{theorem}
\label{thm:initial_state_convergence}
Under Assumption~\ref{assume:bounded-policy-gradient}, 
the entropy $H(S_0|Y; \theta)$ is Lipschitz-continuous and $L$-smooth in $\theta$.
\end{theorem}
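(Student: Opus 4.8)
The plan is to adapt the proof of Theorem~\ref{thm:last_state_convergence} almost verbatim, the only change being that the binary random variable $Z_T$ is replaced by the initial state $S_0$ ranging over the finite set $\mathcal{S}$. From \eqref{eq:HMM_entropy_2} and \eqref{eq:HMM_gradient_entropy_inital}, $\nabla_\theta H(S_0|Y;\theta)$ is a finite sum (over the finite sets $\mathcal{O}^T$ and $\mathcal{S}$) of terms $\nabla_\theta P_\theta(s_0,y)\log P_\theta(s_0|y) + P_\theta(s_0,y)\nabla_\theta\log P_\theta(s_0|y)$, and differentiating \eqref{eq:HMM_gradient_entropy_inital} once more (exactly as \eqref{eq:simplified_hessian} was obtained from \eqref{eq:simplified_gradient}) gives $\nabla_\theta^2 H(S_0|Y;\theta)$ as a finite sum of analogous terms involving $\nabla_\theta^2 P_\theta(s_0,y)$, $\nabla_\theta^2\log P_\theta(s_0|y)$ and first-order factors. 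Thus Lipschitz-continuity reduces to boundedness of $\nabla_\theta H(S_0|Y;\theta)$ and $L$-smoothness to boundedness of $\nabla_\theta^2 H(S_0|Y;\theta)$, and both in turn reduce to boundedness of the gradients and Hessians of $P_\theta(s_0,y)$, $P_\theta(y)$ and $P_\theta(s_0|y)$.

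\textbf{The new ingredient.} The one quantity not already controlled is the $s_0$-conditioned forward probability, playing the role of Lemmas~\ref{lem:first_order}--\ref{lem:second_order}. By Proposition~\ref{prop:initial_opacity_probability} (equation~\eqref{eq:matrix_operation_s0}), $P_\theta(y|s_0)$ is computed by precisely the forward recursion \eqref{eq:update_forward_path_probability}, only with the initial message $\alpha_0(j,\theta)=\mathbf{1}[j=s_0]\,b_j(o_0)$ in place of $\mu_0(j)b_j(o_0)$; in particular $\nabla_\theta\alpha_0=\nabla_\theta^2\alpha_0=0$ still holds. Since $\nabla_\theta P_\theta(i,j)$ and $\nabla_\theta^2 P_\theta(i,j)$ are bounded under Assumption~\ref{assume:bounded-policy-gradient} (via \eqref{eq:grad_P_i_j_calculation} and its second-order analogue, exactly as in the proofs of Lemmas~\ref{lem:first_order}--\ref{lem:second_order}), the recursions \eqref{eq:update_gradient_forward_path_probability} and \eqref{eq:update_hessian_forward_path_probability} propagate boundedness through the finitely many time steps $t=1,\dots,T$, so $\nabla_\theta P_\theta(y|s_0)=\sum_j\nabla_\theta\alpha_T(j,\theta)$ and $\nabla_\theta^2 P_\theta(y|s_0)$ are bounded, and $P_\theta(s_0,y)=\mu_0(s_0)P_\theta(y|s_0)$ inherits bounded derivatives. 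Combining this with Lemmas~\ref{lem:first_order}--\ref{lem:second_order} for $P_\theta(y)$, Bayes' rule \eqref{eq:bayes_rule} and its differentiated forms \eqref{eq:gradient_bayes_rule} (plus the corresponding Hessian identity) express $\nabla_\theta P_\theta(s_0|y)$ and $\nabla_\theta^2 P_\theta(s_0|y)$ as rational functions of these bounded quantities whose denominators are powers of $P_\theta(y)$; then $\nabla_\theta\log P_\theta(s_0|y)=\nabla_\theta P_\theta(s_0|y)/P_\theta(s_0|y)$ and $\nabla_\theta^2\log P_\theta(s_0|y)$ (the $s_0$-analogue of \eqref{eq:hessian_log_P_zT_y}) follow. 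Substituting all of this into $\nabla_\theta H(S_0|Y;\theta)$ and its Hessian then yields both bounds.

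\textbf{Main obstacle.} The delicate point --- the same one tacitly handled in Theorem~\ref{thm:last_state_convergence} --- is that $\log P_\theta(s_0|y)$ and the reciprocals $1/P_\theta(y)$, $1/P_\theta(s_0|y)$ diverge as those probabilities approach $0$, so ``bounded'' must be made uniform in $\theta$. I would resolve this on the compact box $[\theta_{\min},\theta_{\max}]^d$ onto which $\Gamma_\theta$ projects in \eqref{eq:primal_dual_algorithm}: for the softmax parameterization every $\pi_\theta(a|s)$ is uniformly bounded below there, so the zero/nonzero pattern of each observable operator $\matA_o^\theta$ --- and hence the set of pairs $(s_0,y)$ with $P_\theta(y)>0$ and $P_\theta(s_0|y)>0$ --- does not depend on $\theta$; restricting the entropy sums to those (finitely many) pairs as in Theorem~\ref{thm:last_state_convergence}, the functions $P_\theta(y)$ and $P_\theta(s_0|y)$ are continuous and strictly positive on a compact set, hence bounded below by a positive constant, which makes all logarithms and reciprocals uniformly bounded. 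Everything else is bookkeeping identical to the last-state case.
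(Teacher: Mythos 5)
Your proposal is correct, but it diverges from the paper's proof in the one place where a new argument is actually needed. The paper handles the quantity $P_\theta(y|s_0)$ by introducing the \emph{backward} $\beta$ messages of the HMM, $\beta_t(i,\theta)=P_\theta(o_{[t:T]}\mid S_t=i)$, with terminal condition $\beta_T\equiv 1$, and proves boundedness of $\nabla_\theta\beta_0$ and $\nabla_\theta^2\beta_0$ by a backward recursion (Lemmas~\ref{lem:first_order_backward} and~\ref{lem:second_order_backward}, Proposition~\ref{prop:second_order_backward}); you instead rerun the \emph{forward} recursion of Lemmas~\ref{lem:first_order}--\ref{lem:second_order} with the one-hot initial message $\mathbf{1}_{s_0}$ in place of $\mu_0$, which is exactly what Proposition~\ref{prop:initial_opacity_probability} licenses. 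Both routes are sound: the induction that propagates boundedness of $\nabla_\theta P_\theta(i,j)$ and $\nabla_\theta^2 P_\theta(i,j)$ through finitely many steps is identical in structure whether it runs forward from $t=0$ or backward from $t=T$. Your version is more economical in that it reuses the existing lemmas verbatim with only a changed (and still $\theta$-independent) initial condition, whereas the paper's backward formulation is the standard HMM device for conditioning on the initial state and delivers $P_\theta(y|s_0)=\beta_0(s_0,\theta)$ for all $s_0$ in a single pass. One point in your favor: your closing paragraph explicitly establishes uniform positive lower bounds on $P_\theta(y)$ and $P_\theta(s_0|y)$ over the compact projection box (via the $\theta$-independent support pattern of the observable operators under softmax), which the paper only handles tacitly by restricting to pairs with $P_\theta(s_0|y)\neq 0$ and $P_\theta(y)\neq 0$ without arguing that the resulting bounds are uniform in $\theta$; your treatment is the more careful one, at the mild cost of being tied to the softmax (or any uniformly-bounded-below) parameterization rather than to Assumption~\ref{assume:bounded-policy-gradient} alone.
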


Due to the similarity with the last-state opacity case, we defer the proof to Appendix~\ref{app:proof-initial-state}.

\begin{theorem}
For the last-state opacity enforcement problem, under assumption~\ref{assume:local_maxima} and \ref{assume:step_size}, the iterates $(\theta_k, \lambda_k)$ of the algorithm in~\eqref{eq:primal_dual_algorithm} with gradient approximations~\eqref{eq:HMM_approx_gradient_entropy}, ~\eqref{eq:estimation_gradient_value_function}
converge to a fixed point (local maximum) almost surely. The fixed point is a feasible solution to problem~\eqref{eq:opt_problem_1}. 
\end{theorem}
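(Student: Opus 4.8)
The plan is to analyze the recursion \eqref{eq:primal_dual_algorithm} as a \emph{projected two-timescale stochastic approximation} scheme and apply the ODE method \cite{Borkar2008}. First I would put the two updates in standard form, $\theta_{k+1} = \Gamma_\theta\big[\theta_k + \eta_k\big(\nabla_\theta L(\theta_k,\lambda_k) + M_{k+1}^\theta\big)\big]$ and $\lambda_{k+1} = \Gamma_\lambda\big[\lambda_k - \kappa_k\big((V(\mu_0,\theta_k)-\zeta) + M_{k+1}^\lambda\big)\big]$, where $M_{k+1}^\theta$ and $M_{k+1}^\lambda$ are the deviations of the sample estimators \eqref{eq:HMM_approx_gradient_entropy} and \eqref{eq:estimation_gradient_value_function} from their population values $\nabla_\theta L(\theta_k,\lambda_k)$ and $V(\mu_0,\theta_k)-\zeta$. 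Since the $M$ observation/trajectory samples used at iteration $k$ are drawn from the law induced by $\theta_k$, and the per-sample quantities $P_\theta(z_T\mid y)$, $\nabla_\theta P_\theta(z_T\mid y)$, $\nabla_\theta\log P_\theta(y)$ can be computed exactly by the observable-operator / forward-message recursions, both estimators are unbiased, so $\{M_{k+1}^\theta\}$ and $\{M_{k+1}^\lambda\}$ are martingale difference sequences with respect to the natural filtration; on the compact box $[\theta_{\min},\theta_{\max}]^d$ the bounds of Theorem~\ref{thm:last_state_convergence} and boundedness of the reward give uniformly bounded conditional second moments.

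Second, I would verify the structural hypotheses of the two-timescale theorem. The condition $\kappa_k/\eta_k\to 0$ in Assumption~\ref{assume:step_size}, together with $\sum_k\eta_k=\sum_k\kappa_k=\infty$ and $\sum_k(\eta_k^2+\kappa_k^2)<\infty$, places $\lambda$ on the slow timescale and $\theta$ on the fast one and controls the noise via $\sum_k\eta_k^2\,\expect[\|M_{k+1}^\theta\|^2]<\infty$ and likewise for $\lambda$. Lipschitz continuity of the two mean-field vector fields follows from Theorem~\ref{thm:last_state_convergence} (under Assumption~\ref{assume:bounded-policy-gradient}, $H(Z_T|Y;\theta)$ is Lipschitz and $L$-smooth, so $\nabla_\theta L(\cdot,\lambda)$ is Lipschitz uniformly in $\lambda$ on the box) together with the classical smoothness of $V(\mu_0,\cdot)$ and $\nabla_\theta V(\mu_0,\cdot)$ for the softmax parameterization; stability of the iterates is immediate from the projections $\Gamma_\theta,\Gamma_\lambda$ onto compact sets.

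Third, with $\lambda$ held quasi-static, the $\theta$-recursion tracks the projected gradient-ascent flow $\dot\theta \in \hat\Gamma_\theta\big(\nabla_\theta L(\theta,\lambda)\big)$ for frozen $\lambda$; since $L(\cdot,\lambda)$ is $L$-smooth it is its own Lyapunov function along this flow, so $\theta_k$ converges to the set of stationary points of $\theta\mapsto L(\theta,\lambda)$ on the box, and in particular to a local maximizer $\theta^*(\lambda)$. On the slow timescale the $\lambda$-recursion then tracks $\dot\lambda\in\hat\Gamma_\lambda\big(-(V(\mu_0,\theta^*(\lambda))-\zeta)\big)$, whose stable equilibria satisfy complementary slackness: either $\lambda^*=0$ with $V(\mu_0,\theta^*)\ge\zeta$, or $\lambda^*>0$ with $V(\mu_0,\theta^*)=\zeta$ (and with $\lambda_{\max}=\infty$ admissible under Assumption~\ref{assume:local_maxima}, no spurious equilibrium at the boundary $\lambda_{\max}$ appears). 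Hence $(\theta_k,\lambda_k)\to(\theta^*,\lambda^*)$ almost surely to a KKT point of \eqref{eq:opt_problem_2}. Feasibility is then forced by Assumption~\ref{assume:local_maxima}: if $V(\mu_0,\theta^*)<\zeta$ the limiting $\lambda$-flow would push $\lambda_k$ upward without bound unless $\theta^*$ were simultaneously a local maximum of $V(\mu_0,\cdot)$, and every such point is feasible by assumption; this is exactly the argument of \cite{Shie2018rewardconstraint}, which I would adapt.

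The main obstacle is the fast-timescale step. Because $L(\cdot,\lambda)$ is non-concave, the set of equilibria of the inner ODE may fail to be a singleton or to vary continuously with $\lambda$, so the coupling requires either restricting to isolated local maxima — so that $\lambda\mapsto\theta^*(\lambda)$ is well-defined and locally Lipschitz — or recasting the whole argument through differential inclusions / Kushner–Clark-type invariance for the projected, set-valued limit. Handling the projection $\Gamma_\theta$ when the limit lies on the boundary of $[\theta_{\min},\theta_{\max}]^d$, and checking that the $P_\theta(z_T\mid y)\log P_\theta(z_T\mid y)$ and $\nabla_\theta\log P_\theta(y)$ terms appearing in \eqref{eq:HMM_approx_gradient_entropy} remain square-integrable near $P_\theta(z_T\mid y)=0$, are the remaining technical points.
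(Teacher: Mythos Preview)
Your proposal is correct and follows essentially the same route as the paper: cast \eqref{eq:primal_dual_algorithm} as a projected two-timescale stochastic approximation, invoke Borkar's ODE method with $\theta$ on the fast timescale and $\lambda$ on the slow one, obtain the required Lipschitz continuity of $\nabla_\theta L(\cdot,\lambda)$ from Theorem~\ref{thm:last_state_convergence} and the policy-gradient theorem, and conclude feasibility via Assumption~\ref{assume:local_maxima} exactly as in \cite{Shie2018rewardconstraint}. The paper additionally cites \cite{Lee2019avoid} to justify that the limit is a local maximum rather than a saddle point, and is less explicit than you are about the martingale-difference structure of the sample estimators, but the skeleton of the argument is the same.
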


\begin{proof}
We follow the proof in \cite{Shie2018rewardconstraint}, which utilizes techniques from stochastic approximation methods \cite{Borkar2008}. We first prove that the algorithm converges to a local saddle point of the Lagrangian in~\eqref{eq:lagrangian}. 

Due to the timescale separation in assumption~\ref{assume:step_size}, the value of $\lambda$ (updated on the slower timescale) can be regarded as a constant in the recursion of $\theta$. The following ODE governs the evolution of $\theta$:
\begin{equation}
\label{eq:ODE_theta}
\Dot{\theta}_t = \Gamma_\theta(\nabla_\theta L(\theta_t, \lambda))
\end{equation}
where $ \Gamma_\theta$ projects $\theta$ to a compact and convex set $\Theta$. As $\lambda$ is considered constant, the process over $\theta$ is:
\begin{equation}
\theta_{k + 1} = \Gamma_\theta[\theta_k + \eta_k \nabla_\theta L(\theta_k, \lambda)].
\end{equation}
Thus, the recursion of $\theta$ in \eqref{eq:primal_dual_algorithm} can be seen as a discretization of the ODE~\eqref{eq:ODE_theta}. By theorem~\ref{thm:last_state_convergence}, $\nabla_\theta H(Z_T|Y;\theta)$ is Lipschitz-continous (w.r.t. $\theta$). By policy gradient theorem \cite{Sutton1999policy}, the gradient of the value function can be written as
\begin{equation}
\nabla_\theta V(s_0, \theta) = \expect_\theta[G_T \nabla_\theta \log \pi_\theta(a|s)]
\end{equation}
where $G_T = \sum_{t = 0}^T \gamma^t R(s_t, a_t)$ is the total return. 
Under assumption~\ref{assume:bounded-policy-gradient}, it is clear that $\nabla_\theta V(s_0, \theta) $ is Lipschitz-continous (w.r.t. $\theta$). Therefore, $\nabla_\theta L(\theta_t, \lambda)$ is Lipschitz-continuous which satisfies the assumption of functions in chapter 6 of \cite{Borkar2008}. Finally, using the standard stochastic approximation arguments from \cite{Borkar2008} concludes the convergence of $\theta$-recursion. 

The process governing the evolution of $\lambda$:
\begin{equation}
\lambda_{k + 1} =  \Gamma_\lambda[\lambda_k - \kappa_k (V(\mu_0, \theta(\lambda_k)) - \zeta)]
\end{equation}
where $\theta(\lambda_k)$ is the limiting point of the $\theta$-recursion corresponding to $\lambda_k$, can be seen as the following ODE:
\begin{equation}
\label{eq:ODE_lambda}
\Dot{\lambda}_t = \Gamma_\lambda[\zeta - V(\mu_0, \theta(\lambda_t)))].
\end{equation}
As shown in Chapter~6 of \cite{Borkar2008}, $(\lambda_n, \theta_n)$ converges to the internal chain transitive invariant sets of the ODE~\eqref{eq:ODE_lambda}, $\dot{\theta}_t = 0$.
Thus, $(\lambda_k, \theta_k) \to \{ (\lambda(\theta), \theta) : \theta \in \mathbb{R}^D \}$ almost surely where $D$ is the dimension of $\theta$.

Finally, as seen in Theorem~2 of Chapter~2 of \cite{Borkar2008}, $\theta_k \to \theta^\star$ where $\nabla_\theta L(\theta^\star, \lambda) = 0$ almost surely, then $\lambda_k \to \lambda(\theta^\star)$ almost surely, which completes the proof of convergence to the local saddle point. 

Assumption~\ref{assume:local_maxima} states that any local maximum $\theta$ satisfies the constraints, i.e., $V(s_0, \theta) \ge \zeta$; additionally, It has been shown that first-order methods (stochastic gradient methods) converges almost surely to a local maxima (avoiding saddle points) \cite{Lee2019avoid}. Hence, for an unbounded Lagrange multiplier ($\lambda^\star$ can be infinite) 
, the process converges to a fixed point ($(\theta^\star(\lambda^\star), \lambda^\star)$) which is a feasible solution. 



\end{proof}

\begin{remark}
For the initial-state opacity enforcement problem, a similar convergence proof can be derived given the property in Theorem~\ref{thm:initial_state_convergence}. 
\end{remark}

\section{Synthesizing Maximally Opacity-Enforcement Controllers For Language-Based Opacity}
In this section, we will introduce information-theoretic language-based opacity, a type of opacity in which a language or a temporal logic formula defines the secret. We will show that language-based opacity can be reduced to last-state opacity in the sense of information theory based on the relation between temporal logic and automaton, where the secret is defined as the automaton state that indicates the satisfaction of a linear temporal logic formula.

\subsection{Labeled Markov Decision Process and Linear Temporal Logic Formula }

We consider a probabilistic planning problem modeled as a Markov decision process, augmented with a set of atomic propositions and a labeling function, which are used to define the secret of the system.

A \ac{lmdp} is an MDP $M=\langle \mathcal{S}, \mathcal{A}, P,\mu_0, R, \gamma, \calAP, L \rangle$  with  two additional components:
\begin{itemize} \item $\calAP$ is the set of atomic propositions; and
    \item $L: S\rightarrow 2^{\calAP}$ is the labeling function that maps a state to a set of atomic propositions that evaluate true at that state. \end{itemize}

For a finite play $\rho = s_0 a_0 s_1 a_1 s_2\ldots s_n $, 
the labeling of the play $\rho$, denoted $L(\rho)$, is defined as $L(\rho) = L(s_0)L(s_1)\ldots L(s_n)$. That is, the labeling function omits the actions from the play and applies to states only.

\paragraph*{P1's secret temporal objective}
P1 has a secret modeled by a \ac{ltl} formula and aims to obfuscate the information so that P2 is uncertain about P1's progress with respect to satisfying the temporal objective.

%

We omit the syntax and semantics of \ac{ltlf}, which can be found in \cite{de2013linear}. It is known that 
the language of \ac{ltlf} formula $\varphi$ can be represented by the set of words accepted by an automaton $\mathcal{A} = (\mathcal{Q}, \Sigma, \delta, \init, F )$ in which (1) $\mathcal{Q}$ is the set of states; (2) $\Sigma$ is the alphabet (set of input symbols); (3) $\delta: \mathcal{Q} \times \Sigma \rightarrow \mathcal{Q}$ is a deterministic transition function and is complete \footnote{For any $\mathcal{Q} \times \Sigma$, $\delta(q,\sigma)$ is defined. An incomplete transition function can be completed by adding a sink state and redirecting all undefined transitions to that sink state.}; (4) $\init$ is the initial state; and (5) $F \subseteq \mathcal{Q}$ is the set of accepting states. 

The transition function $\delta$ is extended as $\delta(q, \sigma \cdot w) = \delta(\delta(q,\sigma), w)$ where the state $q \in \mathcal{Q}$ and input $\sigma \in \Sigma$. %
A word $w = w_0 w_1 \ldots w_n \in \Sigma^\ast$ is accepted by $\calA$ if and only if $\delta(\init, w)\in F$. 
The set of words accepted by $\calA$ is called the language of $\calA$, denoted by $\mathcal{L}(\calA)$.
Formally, $\mathcal{L}(\calA) = \{ w \in \Sigma^* \mid \delta(q_0, w) \in F \}$.
For notation simplicity, let $\Sigma \coloneqq 2^\calAP$. We assume the DFA for the \ac{ltlf} formula $\varphi$  is known to the observer. For example, P1 and P2 can construct the minimal \ac{dfa} for the given formula.

An informal problem statement is as follows.
\begin{problem}
Given an LMDP $M$ and a secret LTL formula $\varphi$,   design a policy on the LMDP to conceal the information from P2 regarding P1's progress in satisfying the secret $\varphi$, subject to a constraint on the expected total reward given a finite horizon. 
\end{problem}

\subsection{Reduction: From Language-Based Opacity to Last-State Opacity}

\begin{definition}[Product MDP]
\label{def:product_mdp}
Given the LMDP $M=\langle \mathcal{S}, \mathcal{A}, P,\chi_0, R, \gamma, \calAP, L \rangle$, and a \ac{dfa} $\mathcal{A} = (\mathcal{Q},2^\calAP, \delta, \init, F)$ describing P1's secret objective $\varphi$, the product MDP a tuple 
\[
\mathcal{M} = (\mathcal{V}, \mathcal{A}, \Delta, v_0, \mathcal{R})
\]
in which
\begin{itemize}
    \item $\mathcal{V} = \mathcal{S} \times \mathcal{Q}$ is the state space, where each state $(s,q)$ includes an MDP state $s \in \mathcal{S}$, and an automaton state $q \in \mathcal{Q}$;
    \item $\mathcal{A}$ is the P1's action space, same as in $M$;
    \item $\Delta: \mathcal{V} \times \mathcal{A} \rightarrow \dist(\mathcal{V})$ is the probabilistic transition function such  given a state $v \coloneqq (s,q) \in \mathcal{V}$, for any state $s' \in \mathcal{S}$ and an action $a \in \mathcal{A}$, 
    \[
    \Delta((s',q')| (s, q), a) =  P(s'|s,a)
    \]
    where $q'=\delta(q, L(s'))$. Else, $\Delta((s',q')| (s, q), a) = 0$.
    \item $\chi_0$ is the initial state distribution that the initial state $v_0 \coloneqq (s_0,q_0) \sim \chi_0$ with $s_0 \sim \mu_0, q_0 = \delta(\init, L(s_0))$. 
    \item $\mathcal{R} \colon \mathcal{V} \times \mathcal{A} \to \reals$ is the reward function that describes the planning objective:
    \begin{equation}
    \mathcal{R}((s,q), a) = R(s, a). 
    \end{equation}
\end{itemize}
\end{definition} 


Given the product MDP $\mathcal{M}$,  a policy $\pi$ induces a discrete stochastic process $\mathcal{M}_\pi = \{V_t \coloneqq (S_t, Q_t), O_t, t\in \nat\}$. In the case of a Markov policy, 
\[
P(V_{t+1}=v' | V_t = v) = \sum_{a\in \mathcal{A}} \Delta(v'|v,a)\pi(a|v),
\]
and $P(O_t = o | V_t=v) =  \obs(o| s)$, where $v = (s,q)$. 

The conditional entropy of the automaton state $Q_t$ for $0 \le t \le T$ given a finite observation $Y= O_{[0:T]}$ is defined as  
\begin{equation}
\begin{aligned}
H(Q_t|Y; {\mathcal{M}_\pi}) =
-\sum_{q \in \mathcal{Q}}\sum_{y \in \mathcal{O}^T} P^{\mathcal{M}_\pi}(Q_t= q, Y= y) \\ \cdot \log P^{\mathcal{M}_\pi}(Q_t=q|Y=y),
\end{aligned}
\end{equation}
where $P^{\mathcal{M}_\pi}(Q_t=q,y)$ is the joint probability of $Q_t=q$ (for $q \in \mathcal{Q}$) and observation $y=o_{[0:t]}$ given the stochastic process $\mathcal{M}_\pi$ and  $P^{\mathcal{M}_\pi}(Q_t= q|Y=y)$ is the conditional probability of $q$ given the observation $y$ and $\mathcal{O}^T$ is the sample space for the observation sequence of length $T$.  

In addition, P1's value function $V^{\pi}: \mathcal{V} \rightarrow \reals$ is defined as
\[
V ^{\pi}(v) = \expect_{\pi}[\sum\limits_{k = 0}^{\infty}\gamma^{k}\mathcal{R}(V_k, \pi(V_k))|V_0 =v],
\]
where $\expect_{\pi}$ is the expectation w.r.t. the probability distribution induced by the Markov policy $\pi$ from $\mathcal{M}$.


By defining the conditional entropy of the automata state, the maximal language-based opacity-enforcement planning problem can be formulated as a constrained optimization problem as follows:
\begin{equation}
\label{eq:opt_problem_3}
\begin{aligned}
& \max_\theta && H (Q_T|Y;\theta) \\
&\optst: &&  V(\mu_0,\theta) \ge \zeta 
\end{aligned}
\end{equation}
where $\zeta$ is a lower bound on the value function. The value $V(\mu_0,\theta)$ is obtained by evaluating the policy $\pi_\theta$ given the initial state distribution $\mu_0$, i.e., $V(\mu_0, \theta) \coloneqq V^{\pi_\theta}(\mu_0)$, and $H(Q_T|Y;\theta) \coloneqq H(Q_T|Y;  \mathcal{M}_\theta)$. This optimization problem has a similar form as the maximizing last-state opacity in section~\ref{subsec:def_last_state_opacity}. We can use the same primal-dual policy gradient algorithm to solve it.

\section{Experiment Evaluation}
In this section, we demonstrate the effectiveness of our algorithm through several examples. Specifically, we evaluate the last-state and initial-state opacity maximization algorithms on a grid-world example, and the language-based opacity algorithm on a graph-based example.
\subsection{Grid World Example for State-Based Opacity}
The effectiveness of the proposed optimal state-based opacity-enforcement planning algorithms 
is illustrated through a stochastic grid world example shown in Fig.~\ref{fig:grid_world_env}. In this example, we focus on optimizing the last-state opacity and initial-state opacity, respectively. The details of the environment setting are outlined in Fig.~\ref{fig:grid_world_env}. For perception, five sensors are placed on the grid with distinct ranges indicated by the blue, red, yellow, green, and grey areas in the picture. 

As P1 enters the range of a sensor, the observer receives corresponding observations (``A", ``B", ``C", ``D",  ``E", respectively) with probability $p=0.7$ and a null observation (``$N_0$") with probability $1-p=0.3$, 
attributed to the false negative rate of the sensors. If P1 is outside the range of all sensors, the observer consistently receives the null observation “$N_0$.”  
\begin{figure}[tp!]
\includegraphics[width=0.8\linewidth]{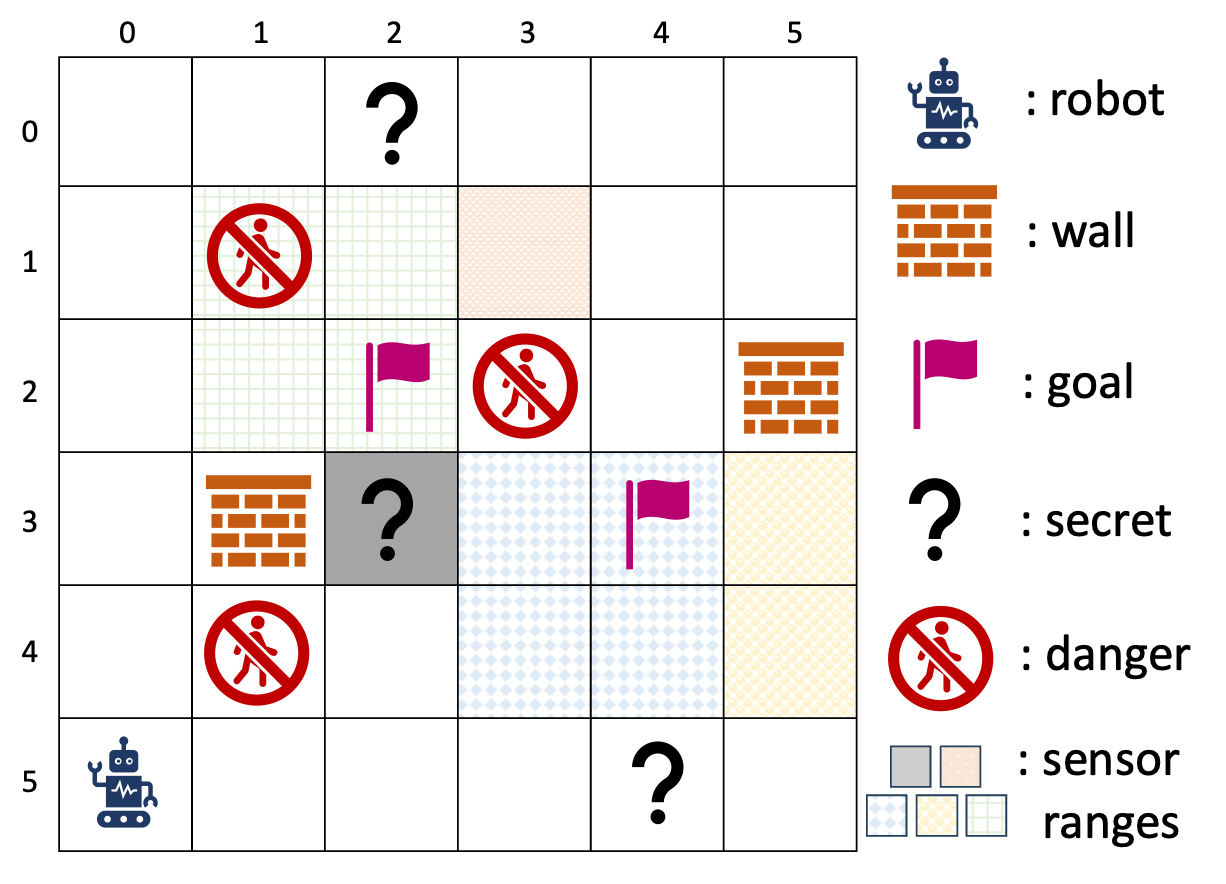}
\centering
\caption{The blue robot is P1 (the agent). P1 can move in four compass directions (north, south, east, west) or remain stationary. However, the dynamics of movement are stochastic. When the robot moves in a specific direction, there is a 0.1 probability that it will also move in the nearest two directions. For instance, if the robot moves east, there is a 0.1 probability of it moving north and a 0.1 probability of it moving south, as illustrated in the image. If the robot hits the boundary or walls, it stays put. And it becomes immobilized if it enters dangerous cells.}
\label{fig:grid_world_env}
\end{figure}

The question marks on the grid represent the secret states for P1, while the flags denote the goal states for P1.  We set the reward of reaching a goal to be $1$. The goal and secret states are not sink states, meaning P1 can revisit these states multiple times. 

We will employ the soft-max policy parameterization, i.e.,
\begin{equation}
    \pi_{\theta}(a|s) = \frac{\exp(\theta_{s,a})}{\sum_{a' \in \mathcal{A}}\exp(\theta_{s, a'})},
\end{equation}
where $\theta\in \reals^{
|S\times A|}$ is the policy parameter vector.
The softmax policy has good analytical properties, including completeness and differentiability. Under the softmax policy parameterization, a feasible policy always exists.

We will use a randomly selected policy (by randomly selecting a policy parameter $\theta$) as the initial policy for future experiments. We restrict the value of $\theta_{s,a}$ within $[-700, 700]$ by projection in~\eqref{eq:primal_dual_algorithm}. Because 1) the evolution of $\theta$ has to be in a convex and compact set; 2) if $\theta_{s,a}$ is too large, there will be numerical issues when running the algorithm. It is known that the softmax function is Lipschitz continuous \cite{gao2018propertiessoftmaxfunctionapplication}. And the Hessian of the softmax function is bounded \cite{wei2025activeinferenceincentivedesign}. Thus, it satisfies the assumption~\ref{assume:bounded-policy-gradient}.

\paragraph{Last-state opacity}
The optimization process incorporates the constraint that the total return must be $\zeta \ge 0.3$, with a time horizon of $T = 12$. For optimizing last-state opacity, the possible initial states are the cells in the first column of the grid world, with the initial-state distribution following a discrete uniform distribution over these states. Fig.~\ref{fig:last_opacity_results} presents the estimated values of last-state opacity and the total return obtained using the primal-dual policy gradient method.  
We rely on the estimated total return, rather than the exact total return from value iteration, because computing the latter can be computationally expensive.

\begin{figure}[t]
\includegraphics[width=\linewidth]{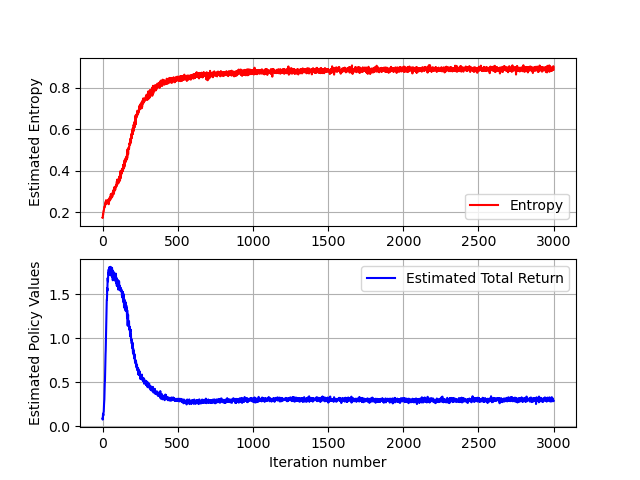}
\centering
\caption{The result of the primal-dual policy gradient method for enforcing optimal last-state opacity under the constraint on the total return.}
\label{fig:last_opacity_results}
\end{figure}

Fig.~\ref{fig:last_opacity_results} illustrates when the algorithm converges, the conditional entropy $H(Z_T|Y)$ eventually approaches $0.888$. The value of policy reaches $0.313$, satisfying the predefined threshold of $\zeta = 0.3$. 
Given the conditional entropy is close to $1$, which is the maximal value of the entropy, the observation reveals little information regarding whether a secret location is visited, even when the observer knows the exact policy used by the robot. 

There is generally an inverse relationship between the conditional entropy and the total return: higher returns are typically associated with lower conditional entropy.
This behavior is a consequence of the environmental configuration. From observing the sampled trajectories (e.g., $(0, 5) \to (1, 5) \to (2, 5) \to (2, 4) \to (3, 4) \to (4, 4) \to (5, 4) \to (5, 3) \to (4, 3) \to (5, 3) \to (4, 3) \to (4, 2) \to (4, 2)$), we noticed that the agent navigates among various sensor ranges or avoids the sensor ranges to confuse the observer, thereby diminishing the total return that P1 can achieve, as P1 is constrained from remaining stationary at the goal. 

\paragraph{Initial-state opacity}
Next, we use the same example to demonstrate the policy that maximizes the initial state opacity under the same constraint on the total return and horizon length.  In this case, the initial-state distribution follows a discrete uniform distribution over a set of possible initial states  $(0,0)$, $(0,5)$, $(5,5)$. The goal of the robot is to make the observer uncertain about which initial state from this set is selected.
Fig.~\ref{fig:initial_opacity_results} illustrates the initial-state opacity, measured by conditional entropy $H(S_0|Y;\theta)$, and the value of the policy $\pi_\theta$ within the primal-dual policy gradient method.
\begin{figure}[t]
\includegraphics[width=\linewidth]{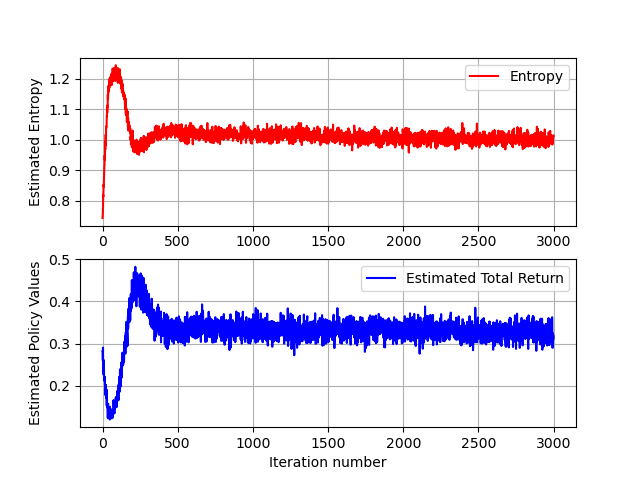}
\centering
\caption{The result of the primal-dual policy gradient method for enforcing optimal initial-state opacity under the constraint on the total return.}
\label{fig:initial_opacity_results}
\end{figure}
When the algorithm converges, the initial conditional entropy eventually reaches $1.01$ while the maximal entropy is $1.58$ in this case. The policy value reaches $0.311$, satisfying the predefined threshold of $\zeta= 0.3$. 

In this environment, elevated initial-state opacity indicates the observer is more uncertain about P1's specific initial state.
The entropy is initially very high at the start of optimization because a randomized policy is used as the initial policy, and none of the initial states are within the sensor range. However, to reach the goal, the agent must sacrifice some initial state uncertainty. This is because, while moving toward the goal, the agent inevitably enters some sensor ranges, which exposes information about its initial state.

\subsection{Graph Example for Language-Based Opacity} 
We illustrate the effectiveness of the proposed language-based opacity-enforcing planning algorithm with a small MDP.
The graph of the \ac{mdp}, shown in Fig.~\ref{fig:graph_example}, consists of seven states and two actions $\{a_1, a_2\}$. The initial state is $0$.   The reward function $R(s, a) = 1$ if $s = 2$; $R(s, a) = 0.1$ if $s = 4$, for any $a\in A$ and $R(s,a)=0$ for any other state-action pairs. 
We set the lower bound on the total return to be $\zeta = 0.2$ and the planning horizon to be $T = 5$. Since states $\{2,4\}$ are not sink states, the agent is allowed to repeatedly visit the goal states. 

Let $\calAP=\{p_s, p_g\}$ where $p_s$ means a secret state is visited and $p_g$ means a goal state is visited.
We define the label function of the LMDP as $L(s) = \{p_s\}$ if $s = h_1, h_2$; $L(s) = \{p_g\}$ if $s = 2, 4$, and $L(s) = \varnothing$ otherwise. The   LTL$_f$ formula is  $\varphi = \Eventually (p_s \land \Eventually p_g) \lor \Eventually p_g$ where $\Eventually$ is the temporal logic operator for ``eventually''. In plain language, the formula can be satisfied by a path that visits a secret and then a goal state, or just visits the goal state. The formula
 can be represented by a $4$-state automaton shown in Fig.~\ref{fig:automaton}. The agent aims to make the observer uncertain about the progress in satisfying the task within each finite run. For example, if the agent reaches $q_3$ but the observer believes the state can be $q_0,q_1,q_2,q_3$ with similar probabilities, then the agent protect the secrecy in the property being satisfied.

\begin{figure}
\label{fig:automaton}
\begin{tikzpicture}[shorten >=1pt,node distance=2.5cm,on grid,auto] 
    \node[state, initial] (0)   {$q_0$}; 
    \node[state,accepting] (1) [above right=of 0] {$q_1$};    
    \node[state] (2) [right=of 0] {$q_2$};
    \node[state] (3) [right=of 2,accepting] {$q_3$};  
\path[->] 
    (0) edge   node{$\{p_g\}$} (1)
        (0) edge  node{$\{p_s\}$} (2)
                 
(2) edge   node{$\{p_g\}$} (3)
(2) edge [loop above] node{$\{p_s\} $} (2) 
(3) edge [loop right]  node{$\truev$} (3)
(1) edge [loop right]  node{$\truev$} (1)
;
\end{tikzpicture}
    \caption{The \ac{dfa} for the secret task. Self-loops with label $\varnothing$ are omitted.}
\end{figure}
 At   states $0,h_1, 2$, the observer has full observation. At states $1,3,4,h_2$, the observer has partial observation, described as follows: The observer is equipped with two sensors with distinct ranges indicated by the red states and blue states. The observer randomly selects a sensor to query at each step. As the agent enters the senor range $\{1,4\}$ (resp. $\{3,h_2\}$), the observer receives   observation  ``r" (resp. ``b").
\begin{figure}[t]
\centering
\begin{tikzpicture}[->,>=stealth',shorten >=1pt,auto,node distance=2.5cm, scale=0.8,transform shape]
	\node[state] (0)  {\Large $0$};
	\node[state, fill=red!60] (1) [right=1cm of 0] {\Large $1$};
	\node[state, double, fill=red!60] (4) [below=2cm of 1] {\Large $4$};
 	\node[state] (5) [above=1cm of 1] {\Large $h_1$};
	\node[state, double] (2) [right=4cm of 5] {\Large $2$};
	\node[state, fill=blue!15] (3) [right=4cm of 1] {\Large $3$};
	\node[state, fill=blue!15] (6)  [right=4cm of 4] {\Large $h_2$};
	
	\path 	
		(0) edge   node {$a_1$} (5)
		(0) edge   node {$a_2$} (1)
		(5) edge[bend left]   node {$a_1, a_2$} (2)
		(1) edge[bend left]   node[above] {$a_1:0.5$} (3)
            (1) edge   node {$a_1:0.5$} (4)
            (1) edge   node[above, pos=0.2] {$\quad a_2$} (6)
		(2) edge[bend left]   node { $a_1$} (3)
            (2) edge   node {$a_2$} (5)
            (3) edge   node {$a_1:0.5$} (1)
            (3) edge[bend left]   node[right] {$a_1:0.5$} (6)
            (3) edge   node {$a_2$} (2)
		(6) edge   node {$a_1, a_2: 0.5$} (3)
            (6) edge[bend left]   node {$a_1, a_2: 0.5$} (4)
            (4) edge[bend left]   node {$a_1:0.5$} (1)
		(4) edge   node {$a_1:0.5$} (6)
            (4) edge   node[below, pos=0.2] {$a_2$} (3);
\end{tikzpicture}
\caption{An illustrative example of opacity-enforcing winning. The arrows with $a_1$ or $a_2$ represent a deterministic action, the arrows with $a_1: 0.5$ represent a stochastic action with probability $0.5$ to a certain state, and the arrows with $a_1, a_2: 0.5$ represent that the agent will transfer to a certain state with probability $0.5$ by both actions. Double circle nodes represent goal states.}
\label{fig:graph_example}
\end{figure}
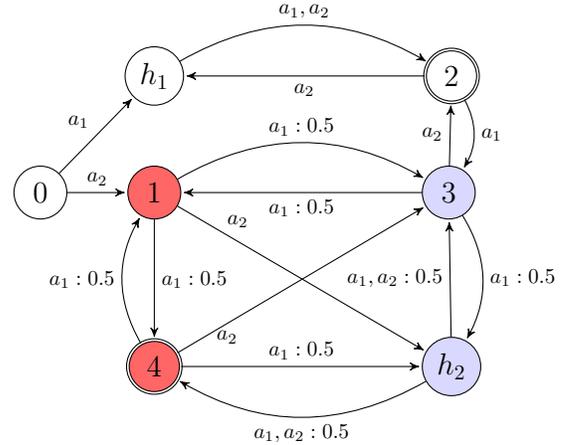

We will adopt the soft-max policy parameterization, represented by the equation:
\[
\pi_{\theta}(a|v) = \frac{\exp(\theta_{v,a})}{\sum_{a' \in \mathcal{A}}\exp(\theta_{v, a'})},
\]
where $\theta \in \mathbb{R}^{|\mathcal{V} \times \mathcal{A}|}$ denotes the policy parameter vector.

Fig.~\ref{fig:opacity_results} shows the estimated values of language-based opacity $H(Q_T|Y,\theta_t)$ alongside the total return $V(\mu_0,\theta_t)$ obtained through the primal-dual policy gradient method. Notably, we rely on the estimated total return rather than the analytically derived total return from value iterations. Because we employ the gradient estimate of the total return in the proposed policy gradient method.
\begin{figure}[t]
\includegraphics[width=\linewidth]{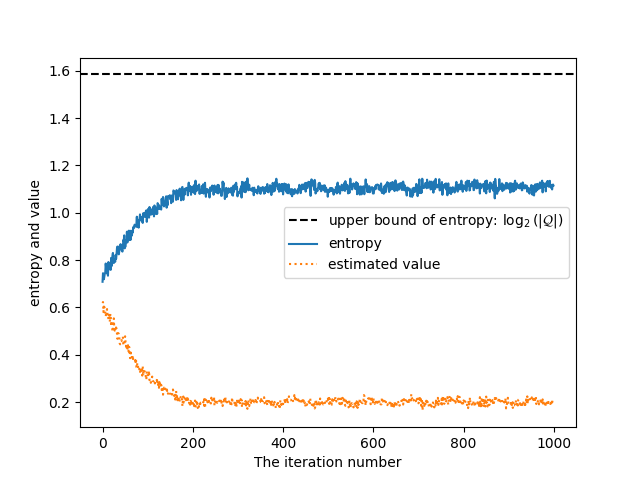}
\centering
\caption{The result of the primal-dual policy gradient algorithm. The solid blue line represents the opacity and the dotted orange line represents the estimated total return.}
\label{fig:opacity_results}
\end{figure}
Fig.~\ref{fig:opacity_results} illustrates when the algorithm converges, the conditional entropy $H(Q_T|Y,\theta_t)$ eventually approaches $1.115$ as $t$ approaches $1000$. The policy value reaches $0.207$, satisfying the predefined threshold of $\zeta = 0.2$. The conditional entropy $0 \le H(Q_T|Y, \theta) \le \log(|\mathcal{Q}| - 1)$ for any policy parameter $\theta$. The upper bound $\log(|\mathcal{Q}| - 1) = 1.585$ in this example. The conditional entropy is inversely proportional to the value/total return. By observing the sampled trajectories, we noticed that the agent goes to the states with sensors to obfuscate the visited states, thereby diminishing the total return that P1 can achieve since  to protect secrecy for the temporal logic formula, the agent can only visit state 4 and obtain a lower reward.

Given the absence of alternative algorithms tailored specifically for addressing the proposed opacity-enforcement planning problems, we opt for a comparative analysis against a baseline algorithm designed for entropy-regularized MDPs. 
In this baseline framework, the objective value is formulated as a weighted sum of the total return and the discounted entropy of the policy \cite{Nachum2017bridge}. Notably, as the weight assigned to the entropy terms increases, the resulting policy tends to become more stochastic and random.

The objective function for entropy-regularized \ac{mdp}s is given by \cite{Nachum2017bridge, Cen2022fastentropy}:
\begin{equation}
V_\tau(s;\theta) = V(s;\theta) + \tau H(s;\theta).
\end{equation}
The discounted entropy term, $H(s;\theta)$, is defined as:
\begin{equation}
H(s;\theta) = -\frac{1}{1 - \gamma} E_{s \sim d_{\pi_\theta}} \Big[ \sum_{a \in \mathcal{A}} \pi_\theta(a|s) \log \pi_\theta(a|s) \Big],
\end{equation}
where $d_{\pi_\theta}$ is the occupancy measure induced by policy $\pi_\theta$. The objective is to maximize the regularized total return $V_\tau(s;\theta)$.

To facilitate the comparison between the two methods, we introduce the metric $P_E$ (probability of guess error), defined as:
\begin{equation}
\label{eq:prob_error}
P_E \coloneqq P_\theta(Q_T \neq \hat{q}_T \mid y),
\end{equation}
where $\hat{q}_T$ represents the maximum likelihood estimator of $Q_T$, given by $\hat{q}_T \coloneqq \arg\max_{q_T} P_\theta(q_T|y)$. Indeed, a higher value of $P_E$ signifies that the observer is more likely to make an incorrect guess regarding the state of the agent. Consequently, the agent can conceal the secret more effectively under such circumstances. Therefore, a method yielding a higher $P_E$ can be considered more effective in preserving secrecy.

We compare our method with entropy-regularized MDPs solved with different $\tau$ values, selecting $11$ values from $\tau = 1$ to $\tau = 20$. The following graphs (Fig.~\ref {fig:guess_error}) illustrate the results. Note that when $\tau \approx 5$, the policy approaches a random policy, making it impractical to further increase opacity by raising $\tau$. 
\begin{figure}[t]
\includegraphics[width=\linewidth]{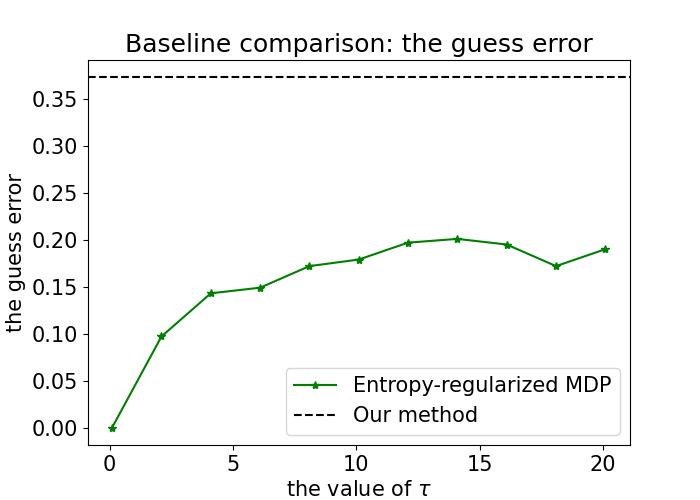}
\centering
\caption{Comparison with baseline (entropy regularized MDP). The dashed line is the probability of guess error from our method. The green line is the probability of guess error from the baseline method.}
\label{fig:guess_error}
\end{figure}
The results depicted in Fig.~\ref{fig:guess_error} underscore the inability of standard entropy-regularized MDP to achieve maximized language-based opacity. Specifically, the observed $P_E$ values only reach approximately $0.2$ when utilizing the regularized MDP approach. In contrast, our proposed method yields significantly higher probabilities of error, reaching $0.384$.

These comparative experiments highlight the capability of our proposed methods to leverage the noise inherent in observations to optimize opacity. Such a feature cannot be attained through policies derived from entropy-regularized MDPs, which do not incorporate the observation function.

\section{Conclusion}

This paper introduces a framework for enforcing information-theoretic opacity in systems modeled as Markov Decision Processes (MDPs), addressing both state-based and language-based secrets. The paper proposes measuring information leakage using the conditional entropy of the secret given the observer's partial observations, as a higher conditional entropy makes it more challenging for an observer to infer the secret from partial observations.

The primary contribution of our work is the development of a novel methodology to maximize information-theoretic opacity in an MDP, subject to task performance constraints. This method provides a strong notion of opacity due to the assumption of an informed observer who is aware of the control policy and system dynamics. The paper demonstrates that finite-memory policies can outperform Markov policies in achieving higher opacity. 
Then, it presents algorithms to solve constrained opacity-enforcement problems for initial-state, last-state, and language-based opacity.
To overcome the non-cumulative nature of conditional entropy, which prevents the use of standard POMDP solvers, a novel method based on observable operators is introduced to compute the gradient of conditional entropy with respect to policy parameters. Utilizing this gradient, a primal-dual gradient-based algorithm is developed and proven to converge to a locally optimal, feasible policy that maximizes opacity while satisfying a total return constraint. A key take-home message is that these proposed methods effectively leverage the inherent noise in observations to significantly increase uncertainty for the observer regarding secret information, as validated by experimental results.

Despite these advances, challenges remain,  particularly in scaling the approach to more complex scenarios. The exact computation of the gradient of conditional entropy requires summing over all possible observation sequences, which is combinatorially large and thus computationally expensive for longer horizons or larger observation spaces.  

Future research could explore several directions. First,    more computationally efficient methods for gradient estimation in large-scale systems would be beneficial. Second, the current framework assumes an informed observer; extending the method to scenarios with less informed observers or those with uncertainties about the system model or policy could yield both theoretical insights and practical applications. Finally, generalizing the information-theoretic opacity concepts to cyber-physical systems with continuous dynamics would enhance their relevance for real-world applications.

\section*{References}
\vspace*{-2em}
\bibliographystyle{IEEEtranS}
\bibliography{chongyang_refs, Chongyang_refs_2}

\appendix
\subsection{Compute a finite-memory policy for opacity enforcement} 
\label{app:finite_memory_reduce}
A finite-memory policy is defined as a tuple $\pi \coloneqq \langle \mathfrak{M}, \calS, \calA, \delta, \psi, m_0 \rangle$
where 
 $\mathfrak{M}$ is a finite set of memory states, 
$\mathcal{S}$ is the set of environment states, 
$\mathcal{A}$ is the set of actions, 
$\delta : \mathfrak{M} \times \mathcal{S} \times \mathcal{A} \rightarrow \mathrm{Dist}(\mathfrak{M})$ 
is the memory update function specifying the distribution over next memory states, and 
$\psi : \mathfrak{M}  \rightarrow \mathrm{Dist}(\mathcal{A})$ 
is the action selection function, 
and $m_0 \in \mathfrak{M}$ is the initial memory state.

 To solve a finite-memory policy, we can reduce it to solving a Markov policy in the augmented-state Markov decision process defined as follows:
\[
\hat M =\langle \calS\times \mathfrak{M},  \calA, \hat P, \hat \mu_0, \hat R \rangle
\]
where $\calS \times \mathfrak{M}  $ is the augmented state space, $\calA$ is the action space. $\hat P$ is defined by
$
\hat P((s',m')|(s,m),a) = P(s'|s,a)\delta(m'|m, (s,a )) $ and the initial state distribution is
$\hat\mu_0(s,m) =\mu_0(s)$
where $m= \delta(m_0, s)$. The reward function is defined such that 
\[
\hat{R}((s,m),a)= R(s,a), \forall (s,a)\in \calS\times \calA.
\]

In this augmented-state MDP, a Markov policy $\pi: S\times \mathfrak{M} \rightarrow \Delta(\calA)$ is a finite-memory policy in the original MDP. Thus, by solving a constrained opacity-enforcement policy in the augmented-state MDP, we recover a finite-memory policy in the original MDP.
\normalcolor
\subsection{Derivations of the sample approximation}
\label{app:approximation}
First, we   extract probability $P_\theta(y)$ from equation~\eqref{eq:HMM_entropy} as a weight, then we obtain
\begin{equation}
H(Z_T|Y;\theta) = - \sum_{y \in \mathcal{O}^T} P_\theta(y) \sum_{z_T \in \{0,1\}} P_\theta(z_T| y) \log P_\theta(z_T | y). 
\end{equation}
Thus, we can approximate the conditional entropy by equation~\eqref{eq:HMM_approx_entropy}. Similarly,
\begin{equation}
\begin{aligned}
 &\nabla_\theta H(Z_T|Y;\theta) \\
= & - \sum_{y \in \mathcal{O}^T} \sum_{z_T \in \{0,1\}} \Big[\nabla_\theta P_\theta(z_T, y) \log P_\theta(z_T | y) \\
&+  P_\theta(z_T, y) \nabla_\theta  \log P_\theta(z_T | y)\Big] \\
= & - \sum_{y \in \mathcal{O}^T} \sum_{z_T \in \{0,1\}} \Big[P_\theta(y) \nabla_\theta P_\theta(z_T| y) \log P_\theta(z_T | y) + \\ 
& P_\theta(z_T| y) \nabla_\theta P_\theta(y) \log P_\theta(z_T | y) 
+   P_\theta(y)\frac{\nabla_\theta P_\theta(z_T | y)}{\ln 2}\Big] \\
 = & - \sum_{y \in \mathcal{O}^T} P_\theta (y) \sum_{z_T \in \{0,1\}} \Big[ \log P_\theta(z_T | y) \nabla_\theta P_\theta(z_T| y) \\
& + P_\theta(z_T| y) \log P_\theta(z_T | y) \nabla_\theta \log P_\theta(y) + \frac{\nabla_\theta P_\theta(z_T | y)}{\ln 2} \Big].
\end{aligned}
\end{equation}
Then we can approximate the gradient by equation~\eqref{eq:HMM_approx_gradient_entropy}.

\subsection{Computing the gradient and hessian of the   $P_\theta(z_T|y)$}
Based on  ~\eqref{eq:HMM_gradient_P_zT_y}, we obtain
\begin{equation}
\label{eq:hessian_p_z_y}
\begin{aligned}
\nabla_\theta^2 & \log P_\theta(z_T|y) = \frac{P_\theta(z_T|y) \nabla_\theta^2 P_\theta(z_T|y)}{P_\theta^2(z_T|y)} \\
&- \frac{\nabla_\theta P_\theta(z_T|y) \nabla_\theta P_\theta(z_T|y)^\top}{P_\theta^2(z_T|y)} \\
& = \frac{\nabla_\theta^2 P_\theta(z_T|y)}{P_\theta(z_T|y)}  - \frac{\nabla_\theta P_\theta(z_T|y) \nabla_\theta P_\theta(z_T|y)^\top}{P_\theta^2(z_T|y)}.
\end{aligned}
\end{equation}
The Hessian is given by
\begin{equation}
\begin{aligned}
\nabla_\theta^2 P_\theta(z_T |y)   = & \sum_{s_{T} \in W} \Big\{ \nabla_\theta \Big[\frac{\nabla_\theta  P_\theta(s_{T}, y)}{P_\theta(y)}\Big] \\
& - \nabla_\theta\Big[\frac{ P_\theta(s_{T}, y)}{ P_\theta^2(y)} \nabla_\theta P_\theta(y) \Big] \Big\}.
\end{aligned}
\end{equation}
The gradient of the first term is
\begin{equation}
\label{eq:term1}
\begin{aligned}
\nabla_\theta & \Big[ \frac{\nabla_\theta P_\theta(s_{T}, y)}{P_\theta(y)} \Big] = \frac{P_\theta(y) \nabla_\theta^2 P_\theta(s_{T}, y)}{P_\theta^2(y)}  \\
& - \frac{\nabla_\theta P_\theta(s_{T}, y) \nabla_\theta P_\theta(y)^\top}{P_\theta^2(y)} \\
& = \frac{\nabla_\theta^2 P_\theta(s_{T}, y)}{P_\theta(y)} - \frac{\nabla_\theta P_\theta(s_{T}, y) \nabla_\theta P_\theta(y)^\top}{P_\theta^2(y)}
\end{aligned}
\end{equation}
and the gradient of the second term is
\begin{equation}
\label{eq:term2}
\begin{aligned}
& \nabla_\theta \Big[ \frac{ P_\theta(s_{T}, y)}{ P_\theta^2(y)} \nabla_\theta P_\theta(y) \Big] = \frac{P_\theta^2(y)\nabla_\theta P_\theta(y) \nabla_\theta P_\theta(s_{T}, y)^\top}{P_\theta^4(y)} \\
& + \frac{P_\theta^2(y) P_\theta(s_{T}, y) \nabla_\theta^2 P_\theta(y)}{P_\theta^4(y)} - + \frac{P_\theta(s_{T}, y) \nabla_\theta P_\theta(y) \nabla_\theta P_\theta^2(y)^\top}{P_\theta^4(y)} \\
&  = \frac{\nabla_\theta P_\theta(y) \nabla_\theta P_\theta(s_{T}, y)^\top}{P_\theta^2(y)} + \frac{P_\theta(s_{T}, y) \nabla_\theta^2 P_\theta(y)}{P_\theta^2(y)} \\ 
& - \frac{2 P_\theta(s_{T}, y) \nabla_\theta P_\theta(y) \nabla_\theta P_\theta(y)^\top}{P_\theta^3(y)} \Big]
\end{aligned}
\end{equation}
Substitute the two terms \eqref{eq:term1} and \eqref{eq:term2} into equation~\eqref{eq:hessian_p_z_y}, we obtain the Hessian of $P_\theta(z_T|y)$ (equation \eqref{eq:hessian_p_z_y_prop}). 

\subsection{Convergence analysis for the initial-state opacity case}
\label{app:proof-initial-state}
To prove the Lipschitz continuity of $H(S_0|Y;\theta)$, we need the Hessian of $H(S_0|Y;\theta)$. Take the gradient on the right side of equation~\eqref{eq:HMM_gradient_entropy_inital}, we have
\begin{equation}
\label{eq:HMM_hessian_posterior_entropy}
\begin{aligned}
  \nabla_\theta^2 H(S_0|Y;\theta) 
& =  - \sum_{y \in \mathcal{O}^T} \sum_{s_0 \in \calS} \Big[\nabla_\theta^2 P_\theta(s_0, y) \log P_\theta(s_0 | y) \\
&+ \nabla_\theta P_\theta(s_0, y) \nabla_\theta  \log P_\theta(s_0 | y)^\top 
\\
&+ P_\theta(s_0, y) \nabla_\theta^2 \log P_\theta(s_0 | y)
\\
&+ \nabla_\theta  \log P_\theta(s_0 | y) \nabla_\theta P_\theta(s_0, y)^\top \Big].
\end{aligned}
\end{equation}

\begin{lemma}
\label{lem:first_order_backward}
If $\nabla_\theta \log \pi_\theta(a \mid s)$ is bounded, then the gradient $\nabla_\theta P_\theta(s_0, y)$ is also bounded.
\end{lemma}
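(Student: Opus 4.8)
The plan is to adapt the forward-message argument used in the proof of Lemma~\ref{lem:first_order} to the situation in which the initial state is fixed. The first observation is that $\mu_0$ does not depend on $\theta$, so from $P_\theta(s_0,y) = \mu_0(s_0)\,P_\theta(y\mid s_0)$ we obtain $\nabla_\theta P_\theta(s_0,y) = \mu_0(s_0)\,\nabla_\theta P_\theta(y\mid s_0)$; since $\mu_0(s_0)\le 1$, it suffices to bound $\nabla_\theta P_\theta(y\mid s_0)$.

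Next I would introduce the conditional forward messages $\tilde\alpha_t(j,\theta) \coloneqq P_\theta(o_{[0:t]}, S_t = j \mid S_0 = s_0)$. These satisfy the same recursion as $\alpha_t$ in~\eqref{eq:update_forward_path_probability}, namely $\tilde\alpha_t(j,\theta) = \sum_{i\in\calS}\tilde\alpha_{t-1}(i,\theta)\,P_\theta(i,j)\,b_j(o_t)$, but now with the $\theta$-independent initialization $\tilde\alpha_0(j,\theta) = \mathbf{1}_{s_0}(j)\,b_j(o_0)$, so that $\nabla_\theta\tilde\alpha_0(j,\theta)=0$. Differentiating the recursion then yields the exact analog of~\eqref{eq:update_gradient_forward_path_probability}. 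Equivalently, one could start from the observable-operator formula $P_\theta(y\mid s_0) = \mathbf{1}_N^\top\matA_{o_T}^\theta\cdots\matA_{o_0}^\theta\mathbf{1}_{s_0}$ of Proposition~\ref{prop:initial_opacity_probability} and differentiate by the product rule; either route works.

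Then I would import the bound on $\nabla_\theta P_\theta(i,j)$ already established inside Lemma~\ref{lem:first_order}: by~\eqref{eq:grad_P_i_j_calculation}, $\|\nabla_\theta P_\theta(i,j)\| \le \sum_{a\in\calA}P(j|i,a)\,\pi_\theta(a|i)\,\|\nabla_\theta\log\pi_\theta(a|i)\|$, which is finite under the hypothesis. Since $b_j(o_t)\le 1$, $P_\theta(i,j)\le 1$, and each $\tilde\alpha_{t-1}(i,\theta)\in[0,1]$ (it is a conditional probability), every summand in the differentiated recursion is a product of quantities bounded uniformly in $\theta$; a short induction on $t$ over the finite horizon $T$ then shows that $\|\nabla_\theta\tilde\alpha_T(j,\theta)\|$ is bounded for each $j$. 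Finally $P_\theta(y\mid s_0) = \sum_{j\in\calS}\tilde\alpha_T(j,\theta)$, so $\nabla_\theta P_\theta(y\mid s_0)$, and hence $\nabla_\theta P_\theta(s_0,y)$, is a finite sum of bounded vectors and therefore bounded.

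I do not anticipate a genuine obstacle: this is essentially the argument of Lemma~\ref{lem:first_order} with $\mu_0$ replaced by a one-hot vector. The only point needing a little care is to note that the conditional forward messages remain bounded by $1$, which is exactly what keeps the inductive constants independent of $\theta$ over the compact parameter set of~\eqref{eq:primal_dual_algorithm}; this is the same mechanism that made the last-state case go through.
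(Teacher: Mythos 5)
Your proof is correct, but it runs the recursion in the opposite direction from the paper. The paper's proof of Lemma~\ref{lem:first_order_backward} uses the backward messages $\beta_t(i,\theta)=P(o_t,\dots,o_T\mid S_t=i)$ of \eqref{eq:backward_definition}, initialized at the terminal time by $\beta_T(i,\theta)=1$ (hence $\nabla_\theta\beta_T=0$), propagated backward via \eqref{eq:backward_algorithm}, and evaluated at $t=0$ to give $P_\theta(s_0,y)=\beta_0(s_0,\theta)\mu_0(s_0)$. You instead define forward messages conditioned on the initial state, $\tilde\alpha_t(j,\theta)=P_\theta(o_{[0:t]},S_t=j\mid S_0=s_0)$, with a $\theta$-independent one-hot initialization, and sum over the terminal state; this is exactly the observable-operator identity of Proposition~\ref{prop:initial_opacity_probability} differentiated by the product rule. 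Both arguments rest on the same two ingredients --- the bound on $\nabla_\theta P_\theta(i,j)$ from \eqref{eq:grad_P_i_j_calculation} and the fact that a finite recursion with uniformly bounded, $\theta$-independent initial condition has bounded gradient --- so they are equally rigorous. The practical difference is minor: the backward recursion delivers $P_\theta(y\mid i)=\beta_0(i,\theta)$ for every candidate initial state $i$ in a single pass, which is convenient when assembling the full gradient \eqref{eq:HMM_gradient_entropy_inital} over all $s_0\in\supp(\mu_0)$, whereas your conditional forward recursion must be rerun once per initial state; for the purposes of this boundedness lemma alone the two are interchangeable. Your closing remark that the messages themselves lie in $[0,1]$ is a point the paper leaves implicit, and stating it explicitly is a small improvement.
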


\begin{proof}
We employ the backward $\beta$ messages from HMM \cite{baum1970maximization}, defined as follows: 
Given a observation sequence $o_{[t:T]}$, for each $0\le t \le T$, the backward $\beta$ message at the time step $t$ for a given state $i$ is,
\begin{equation}
\label{eq:backward_definition}
\beta_t(i,\theta) \coloneqq P(o_t,o_{t+1},\dots,o_T|S_t = i),
\end{equation}
which represents the probability of having the observation sequence $o_{[t:T]}$ given the state is $i$ at time step $t$. It can be computed by the backward algorithm:
\begin{equation}
\label{eq:backward_algorithm}
\beta_t(i, \theta) = \sum_{j \in \mathcal{S}} \beta_{t+1}(j, \theta) P_\theta(i,j) b_j(o_{t+1}),
\end{equation}
for $ t=0,\ldots T-1$, and $\beta_T(i,\theta) = 1$ for $1 \le i \le N$. Also
note that $P_\theta(y|i) = \beta_0(i,\theta)$ for each state $i$ in the support of the initial state distribution. Thus, the probability $P_\theta(s_0, y) = \beta_0(s_0, \theta) \mu_0(s_0)$ for a given initial state $s_0$. 

For each $i\in \supp(\mu_0)$, we can also obtain the gradient of $P_\theta(y|i)$   by
\begin{multline}
\label{eq:gradient_backward_algorithm}
 \nabla_\theta \beta_t(i, \theta) =
\sum_{j \in \mathcal{S}} [ P_\theta(i,j) b_j(o_{t+1}) \nabla_\theta \beta_{t+1}(j, \theta) \\
 +  b_j(o_{t+1}) \beta_{t+1}(j, \theta) \nabla_\theta  P_\theta(i,j)], 
\end{multline}
and $\nabla_\theta \beta_T(i, \theta) = 0$ since $\beta_T(i, \theta)$ does not depends on $\theta$. $\nabla_\theta  P_\theta(i,j)$ is bounded from the proof of Lemma~\ref{lem:first_order}. According to the recursion~\eqref{eq:gradient_backward_algorithm}, $\nabla_\theta \beta_0(j, \theta)$ is bounded because it is a finite summation of bounded gradients. Therefore, $\nabla_\theta P_\theta(s_0, y)$ is bounded.
\end{proof}

\begin{proposition}
\label{prop:second_order_backward}
Given an observation sequence $y$ and $s_0\in \calS$ such that $P_\theta(s_0|y) \neq 0$ and $P_\theta(y) \neq 0$, the Hessian of $\log P_\theta(s_0|y)$ can be calculated as
\begin{equation}
\label{eq:hessian_log_P_s0_y}
\begin{aligned}
\nabla_\theta^2 \log P_\theta(s_0|y) = \frac{\nabla_\theta^2 P_\theta(s_0|y)}{P_\theta(s_0|y)}  - \frac{\nabla_\theta P_\theta(s_0|y) \nabla_\theta P_\theta(s_0|y)^\top}{P_\theta^2(s_0|y)} 
\end{aligned}
\end{equation}
where
\begin{equation}
\label{eq:hessian_p_s0_y_prop}
\begin{aligned}
\nabla_\theta^2 & P_\theta(s_0|y) = \sum_{s_{0} \in \calS} \mu_0(s_0) \\
& \cdot \Big[ \frac{\nabla_\theta^2 P_\theta(y|s_0)}{P_\theta(y)} - \frac{\nabla_\theta P_\theta(y|s_0) \nabla_\theta P_\theta(y)^\top}{P_\theta^2(y)} \\
& - \frac{\nabla_\theta P_\theta(y) \nabla_\theta P_\theta(y|s_0)^\top}{P_\theta^2(y)} - \frac{P_\theta(y|s_0) \nabla_\theta^2 P_\theta(y)}{P_\theta^2(y)} \\ 
& + \frac{2 P_\theta(y|s_0) \nabla_\theta P_\theta(y) \nabla_\theta P_\theta(y)^\top}{P_\theta^3(y)} \Big].
\end{aligned}
\end{equation}
\end{proposition}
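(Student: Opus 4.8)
The plan is to derive both displays by direct differentiation, treating the claim as a bookkeeping exercise on top of Bayes' rule~\eqref{eq:bayes_rule} and the gradient formula~\eqref{eq:gradient_bayes_rule}. First I would record the chain rule for the Hessian of a logarithm of a positive scalar function $f(\theta)$:
\[
\nabla_\theta^2 \log f = \frac{\nabla_\theta^2 f}{f} - \frac{\nabla_\theta f\,\nabla_\theta f^\top}{f^2}.
\]
Taking $f = P_\theta(s_0|y)$, which is positive on the region $P_\theta(y)\neq 0$ and twice continuously differentiable there (each observable operator $\matA_o^\theta$ is twice differentiable in $\theta$ under Assumption~\ref{assume: differentiable}, hence so is every finite product of them and so is the ratio $P_\theta(y|s_0)/P_\theta(y)$), gives equation~\eqref{eq:hessian_log_P_s0_y} immediately.

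The core computation is the Hessian of the posterior itself. I would start from Bayes' rule in the form $P_\theta(s_0|y) = \mu_0(s_0)\,P_\theta(y|s_0)\,P_\theta(y)^{-1}$, where $\mu_0(s_0)$ does not depend on $\theta$, and differentiate the already-established gradient~\eqref{eq:gradient_bayes_rule} one more time. Its first term $\mu_0(s_0)\,\nabla_\theta P_\theta(y|s_0)/P_\theta(y)$ contributes, by the quotient rule, $\nabla_\theta^2 P_\theta(y|s_0)/P_\theta(y)$ and $-\nabla_\theta P_\theta(y|s_0)\,\nabla_\theta P_\theta(y)^\top/P_\theta^2(y)$; its second term $-\mu_0(s_0)\,P_\theta(y|s_0)\,\nabla_\theta P_\theta(y)/P_\theta^2(y)$ contributes the remaining three matrices $-\nabla_\theta P_\theta(y)\,\nabla_\theta P_\theta(y|s_0)^\top/P_\theta^2(y)$, $-P_\theta(y|s_0)\,\nabla_\theta^2 P_\theta(y)/P_\theta^2(y)$, and $+2\,P_\theta(y|s_0)\,\nabla_\theta P_\theta(y)\,\nabla_\theta P_\theta(y)^\top/P_\theta^3(y)$, the last of these coming from differentiating $P_\theta(y)^{-2}$. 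Collecting the five matrices and factoring out the common $\mu_0(s_0)$ reproduces~\eqref{eq:hessian_p_s0_y_prop}. All the ingredients $\nabla_\theta P_\theta(y|s_0)$, $\nabla_\theta^2 P_\theta(y|s_0)$, $\nabla_\theta P_\theta(y)$, $\nabla_\theta^2 P_\theta(y)$ are available from Proposition~\ref{prop:initial_opacity_probability} and the backward-message recursions~\eqref{eq:backward_algorithm}--\eqref{eq:gradient_backward_algorithm} together with the forward recursions for $P_\theta(y)$, so the right-hand side is fully computable.

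I do not expect a real obstacle; the computation parallels the last-state derivation in the appendix (equations~\eqref{eq:term1}--\eqref{eq:term2} and~\eqref{eq:hessian_p_z_y_prop}) with $P_\theta(s_T,y)$ replaced by $\mu_0(s_0)\,P_\theta(y|s_0)$. The only places one can go wrong are cosmetic: keeping the power $P_\theta^3(y)$ (not $P_\theta^4(y)$) in the last term after cancelling the $P_\theta^2(y)$ that appears transiently, and remembering that $\nabla_\theta P_\theta(y)\,\nabla_\theta P_\theta(y|s_0)^\top$ and $\nabla_\theta P_\theta(y|s_0)\,\nabla_\theta P_\theta(y)^\top$ are distinct rank-one matrices, so both cross terms must be retained rather than merged. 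Mirroring equations~\eqref{eq:term1}--\eqref{eq:term2} line for line makes the bookkeeping routine.
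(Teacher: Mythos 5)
Your proposal is correct and follows essentially the same route as the paper's own proof: apply the chain-rule identity for $\nabla_\theta^2 \log f$ and differentiate the Bayes-rule gradient~\eqref{eq:gradient_bayes_rule} once more via the quotient rule, mirroring the last-state computation in~\eqref{eq:term1}--\eqref{eq:term2} with $P_\theta(s_T,y)$ replaced by $\mu_0(s_0)P_\theta(y|s_0)$. (One small observation: your derivation yields, for a fixed $s_0$, the factor $\mu_0(s_0)$ times the bracketed expression \emph{without} the sum over $s_0\in\calS$ that appears in~\eqref{eq:hessian_p_s0_y_prop} as printed, so you are reproducing the intended formula rather than the literal one.)
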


\begin{proof}
The equation~\eqref{eq:hessian_log_P_s0_y} can be proved similarly as in equation~\eqref{eq:hessian_log_P_zT_y}. For equation~\eqref{eq:hessian_p_s0_y_prop} The gradient of $P_\theta (s_0|y)$ can be written as
\begin{equation}
\label{eq:gradient_bayes_rule_2}
\begin{aligned}
\nabla_\theta P_\theta(s_0|y) = \mu_0(s_0) [\frac{1}{P_\theta(y)} \nabla_\theta P_\theta(y|s_0) \\
- \frac{P_\theta(y|s_0)}{P_\theta^2(y)} \nabla_\theta P_\theta(y) ].
\end{aligned}
\end{equation}
Then the proposition can be proved similarly by replacing the $P_\theta(s_T, y)$ with $P_\theta(y|s_0)$ in equation~\eqref{eq:hessian_p_z_y_prop}. 
\end{proof}

\begin{lemma}
\label{lem:second_order_backward}
Assume that the gradient $\nabla_\theta \log \pi_\theta(a | s)$ and Hessian $\nabla_\theta^2 \log \pi_\theta(a | s)$ are bounded. Then the Hessian of probability $\nabla_\theta^2 P_\theta(s_0, y)$ is bounded.
\end{lemma}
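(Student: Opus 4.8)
The plan is to mirror the structure of the proof of Lemma~\ref{lem:second_order} (the last-state case), but using the backward $\beta$ messages introduced in the proof of Lemma~\ref{lem:first_order_backward} rather than the forward $\alpha$ messages. Recall that for a fixed initial state $s_0 \in \supp(\mu_0)$ we have $P_\theta(s_0, y) = \beta_0(s_0,\theta)\,\mu_0(s_0)$, and since $\mu_0$ does not depend on $\theta$, it suffices to show that $\nabla_\theta^2 \beta_0(s_0,\theta)$ is bounded. So the first step is to differentiate the recursion for $\nabla_\theta \beta_t(i,\theta)$ in \eqref{eq:gradient_backward_algorithm} one more time with respect to $\theta$, producing a recursion for the Hessian $\nabla_\theta^2 \beta_t(i,\theta)$ of the form
\begin{equation}
\begin{aligned}
\nabla_\theta^2 \beta_t(i, \theta) &= \sum_{j \in \mathcal{S}} P_\theta(i,j) b_j(o_{t+1}) \nabla_\theta^2 \beta_{t+1}(j, \theta) \\
&+ \sum_{j \in \mathcal{S}} b_j(o_{t+1}) \big[ \nabla_\theta \beta_{t+1}(j,\theta)\, \nabla_\theta P_\theta(i,j)^\top + \nabla_\theta P_\theta(i,j)\, \nabla_\theta \beta_{t+1}(j,\theta)^\top \big] \\
&+ \sum_{j \in \mathcal{S}} b_j(o_{t+1})\, \beta_{t+1}(j,\theta)\, \nabla_\theta^2 P_\theta(i,j),
\end{aligned}
\end{equation}
with base case $\nabla_\theta^2 \beta_T(i,\theta) = 0$.

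The second step is to control each ingredient of this recursion uniformly. The quantities $b_j(o_{t+1}) \in [0,1]$ and $\beta_{t+1}(j,\theta) \in [0,1]$ (it is a probability) are bounded; $\nabla_\theta P_\theta(i,j)$ is bounded by Lemma~\ref{lem:first_order} and $\nabla_\theta^2 P_\theta(i,j)$ is bounded under Assumption~\ref{assume:bounded-policy-gradient} by the computation in \eqref{eq:hessian_P_i_j_calculation}; and $\nabla_\theta \beta_{t+1}(j,\theta)$ is bounded by Lemma~\ref{lem:first_order_backward}. Hence each term in the recursion for $\nabla_\theta^2 \beta_t$ is a finite sum (over the finite set $\mathcal{S}$) of products of bounded matrices/vectors, so by finite backward induction on $t$ from $T$ down to $0$ — the horizon $T$ being finite — we conclude $\nabla_\theta^2 \beta_0(s_0,\theta)$ is bounded. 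This gives boundedness of $\nabla_\theta^2 P_\theta(s_0,y)$ for each fixed $s_0$ and each fixed observation sequence $y$; since $\mathcal{O}^T$ is finite, the bound is uniform over $y$ as well.

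I do not expect any genuine obstacle here: the argument is entirely parallel to Lemma~\ref{lem:second_order}, with $\alpha_t$ replaced by $\beta_t$ and the recursion run backward instead of forward. The only point requiring mild care is bookkeeping with the outer-product (transpose) terms that appear when differentiating a product of a vector-valued and a matrix-valued function of $\theta$ — these are the cross terms $\nabla_\theta \beta_{t+1}\,\nabla_\theta P_\theta(i,j)^\top$ and its transpose — but their boundedness follows immediately from boundedness of the two factors established in the earlier lemmas. Once $\nabla_\theta^2 P_\theta(s_0,y)$ and $\nabla_\theta^2 P_\theta(y)$ (the latter already bounded by Lemma~\ref{lem:second_order}) are in hand, Proposition~\ref{prop:second_order_backward} then yields boundedness of $\nabla_\theta^2 \log P_\theta(s_0|y)$, and substituting into \eqref{eq:HMM_hessian_posterior_entropy} gives that $\nabla_\theta^2 H(S_0|Y;\theta)$ is bounded, completing the ingredients needed for Theorem~\ref{thm:initial_state_convergence}.
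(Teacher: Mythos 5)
Your proposal follows essentially the same route as the paper: differentiate the backward recursion for $\nabla_\theta \beta_t$ once more to obtain a recursion for $\nabla_\theta^2 \beta_t$ with base case $\nabla_\theta^2\beta_T=0$, bound each ingredient using Assumption~\ref{assume:bounded-policy-gradient} and the earlier lemmas, and conclude by finite backward induction. The only difference is cosmetic — your Hessian recursion correctly records both symmetric cross terms $\nabla_\theta \beta_{t+1}\,\nabla_\theta P_\theta(i,j)^\top$ and its transpose, whereas the paper writes only one, which does not affect the boundedness conclusion.
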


\begin{proof}
Follow the proof of Lemma~\ref{lem:first_order}, the Hessian matrix $\nabla_\theta^2 P_\theta(y|s_0) = \nabla_\theta^2 \beta_0(j, \theta)$. Calculate the gradient of equation~\eqref{eq:gradient_backward_algorithm}, we obtain
\begin{equation}
\label{eq:update_hessian_backward_path_probability}
\begin{aligned}
&\nabla_\theta^2 \beta_t(j, \theta) = \sum_{i=1}^N  b_j(o_{t+1})\nabla_\theta \beta_{t+1}(i, \theta) \nabla_\theta P_\theta(i,j)^\top \\
& + \sum_{i=1}^N P_\theta(i,j) b_j(o_{t+1}) \nabla_\theta^2 \beta_{t+1}(i, \theta) \\
& + \sum_{i=1}^N \beta_{t+1}(i, \theta) b_j(o_{t+1}) \nabla_\theta^2 P_\theta(i,j),
\end{aligned}
\end{equation}
and $\nabla_\theta^2 \beta_T(j, \theta) = 0$ since $\nabla_\theta \beta_T(j,\theta) = 0$. $\nabla_\theta^2  P_\theta(i,j)$ is bounded from the proof of Lemma~\ref{lem:second_order}. And according to the recursion, $\nabla_\theta^2 \beta_0(j, \theta)$ is finite because it is a finite summation of bounded gradients and Hessian matrices. Therefore, $\nabla_\theta^2 P_\theta(y|s_0)$ is bounded.
\end{proof}

\textit{Theorem~\ref{thm:initial_state_convergence}:}
Under Assumption~\ref{assume:bounded-policy-gradient}, 
the entropy $H(S_0|Y; \theta)$ is Lipschitz-continuous and Lipschitz-smooth in $\theta$.

\begin{proof}
With the Lemma~\ref{lem:first_order_backward}, Proposition~\ref{prop:second_order_backward}, Lemma~\ref{lem:second_order_backward}, the theorem can be proved similarly as in the proof of theorem~\ref{thm:last_state_convergence}.
\end{proof}

\end{document}